\documentclass{eptcs}

\usepackage{breakurl}

\usepackage[latin1]{inputenc}
\usepackage{url}

\usepackage{amsmath,amssymb,amsthm}
\usepackage[english]{babel}
\usepackage{cdsty}

\title{Pattern Unification for the Lambda Calculus with Linear and
Affine Types\thanks {\footnotesize This work was
   in part supported by NABIIT grant 2106-07-0019 of the Danish Strategic Research
   Council.}}

\author{Anders Schack-Nielsen
   \qquad\qquad
   Carsten Sch\"urmann
\institute{ 
  IT University of Copenhagen\\
  Copenhagen, Denmark \\
  \texttt{anderssn|carsten@itu.dk}}
  }

\newcommand{\aand}{\;\&\;}
\newcommand{\lhat}[1]{\widehat{\;#1\;}}
\newcommand\affarr{\mathrel{-}\joinrel\mathrel{@}}
\newcommand\afflam{\mathring{\lambda}}
\newcommand\affapp{{@}}
\newcommand\affext{\mathring{;}\,}
\newcommand\fI{\textup{$\mathbf{I}$}}
\newcommand\fA{\textup{$\mathbf{A}$}}
\newcommand\fL{\textup{$\mathbf{L}$}}
\newcommand\fUA{{\textup{\textbf{U}}_\textup{\textbf{A}}}}
\newcommand\fUL{{\textup{\textbf{U}}_\textup{\textbf{L}}}}
\newcommand{\affweak}{\succ_\mathrm{aff}}

\newtheorem{thm}{Theorem}[section]
\newtheorem{lem}[thm]{Lemma}
\theoremstyle{definition}
\newtheorem{defin}[thm]{Definition}

\begin{document}
\maketitle
\begin{abstract}
  We define the pattern fragment for higher-order unification problems
  in linear and affine type theory and give a deterministic
  unification algorithm  that computes most general
  unifiers.
\end{abstract}

\section{Introduction}
Logic programming languages, type inference algorithms, and automated
theorem provers are all examples of systems that rely on unification.
If the unification problem has to deal with logic variables at higher
type (functional type), we speak of higher-order
unification~\cite{Huet75}.  Higher-order unification is in general
undecidable, but it can be turned decidable, if appropriately
restricted to a fragment.  For example, Miller's pattern fragment
characterizes a first-order fragment, for which unification is
decidable~\cite{Miller91jlc}. 

As substructural type theories are becoming more prevalent, for
example, in systems that need to represent consumable resources,
higher-order unification algorithms need to deal with logic variables
at linear or affine type.  Linear and affine type theories, for
example, refine intuitionistic type theory in the following way:
Besides intuitionistic assumptions, which can be referred to an
arbitrary number of times, linear and affine assumptions are treated
as resources that must be referred to \emph{exactly once} and \emph{at
  most once}, respectively.

As substructural type theories are mere refinements, one might erroneously suspect that the standard
intuitionistic pattern unification algorithm can be applied to this setting directly.
This, unfortunately, is not the case.  Consider the following two linear unification problems, where we write, as usual,~$\lhat{}$ for linear application and
juxtaposition for intuitionistic application.
\begin{align}
F\lhat{}x &\doteq c\lhat{}(H_1\;x)\lhat{}(H_2\;x) \label{eq:split2} \\
F\lhat{}x &\doteq c\lhat{}(H\;x) \label{eq:split1}
\end{align}
These examples take place in a context in which $x$ is an intuitionistic
variable.  However, the linear application on the left-hand side implies
that the variable must occur \emph{exactly} once in any valid
instantiation of $F$, but in~\eqref{eq:split2} we cannot know whether
$x$ should occur in $H_1$ or $H_2$.  This additional problem over normal
intuitionistic higher-order unification is caused exactly by the
interaction of linear and intuitionistic variables.
We solve this issue by imposing a separation of linear, affine, and
intuitionistic variables.


In this paper, we refine the intuitionistic pattern fragment into a
pattern fragment for linear and affine type theory.
We describe a unification algorithm for this fragment 
and prove and prove it correct. Furthermore, we show that in this fragment most general unifiers
exist. Finally, we extend the algorithm with a procedure we call
\emph{linearity pruning}.  This procedure goes beyond the pattern
fragment and treats equations such as~\eqref{eq:split2}
and~\eqref{eq:split1} where variables may have to change their status,
for example from being affine to linear.  Unification problems in this
extended fragment continue to be decidable.  For example,
for~\eqref{eq:split1} the algorithm finds the most general unifier, which is $F=
\widehat{\lambda} x.c \lhat{} (G\lhat{}x)$ and $H= \lambda x.G\lhat{}x$.
Our focus in this paper is finding unique most general unifiers, and
since~\eqref{eq:split2} has a set of most general unifiers of size two,
we are not going to try to solve it.  However, one could easily extend
linearity pruning to these cases by considering the finite number of context
splits.

Previous approaches to higher-order linear unification have been
restricted to highly non-deterministic algorithms, such as the
preunification by Cervesato and
Pfenning~\cite{Cervesato97linearhigher-order}.  In contrast, our
algorithm is completely deterministic, and very well suited for
implementation.  It is the core algorithm of the Celf proof
assistant~\cite{SchackNielsen08ijcar}.

\section{Language}
In~\cite{SchackNielsen10ijcar} we introduced a calculus of explicit substitutions
for the $\lambda$-calculus with linear, affine, and intuitionistic
variables and logic variables.  Along with the calculus we introduced a
type system and a reduction semantics, which was proven to be
type-preserving, confluent, and terminating.
\begin{align*}
&\textbf{Types:} & A,B &::= a \mid A\aand B \mid
  A\multimap B \mid A\affarr B \mid A\rightarrow B \\
&\textbf{Terms:} & M,N &::= 1^f \mid M[s] \mid
  \langle M,N\rangle \mid \textsf{fst}\;M \mid \textsf{snd}\;M \mid X[s] \\
  &&&\phantom{::=} \mid \widehat{\lambda}M \mid \afflam M \mid \lambda M
  \mid M\lhat{}N \mid M\affapp N \mid M\;N \\
&\textbf{Substitutions:} & s,t &::= \textsf{id} \mid \;\uparrow\; \mid
  M^f.s \mid s \circ t \\
&\textbf{Linearity flags:} & f &::= \fI \mid \fA \mid \fL \\
&\textbf{Contexts:} & \Gamma &::= \cdot \mid \Gamma,A^l \\
&\textbf{Context linearity flags:} & l &::= f \mid \fUL \mid \fUA
\end{align*}
We tag each variable $1^f$ with a flag signifying whether
the variable is intuitionistic, affine, or linear.
We 
use $\uparrow^n$ where $n\geq 0$ as a short-hand for $n$ compositions of shift,
i.e.\ $\uparrow\circ\; (\uparrow\circ\; (\ldots \;\circ\; (\uparrow\circ\uparrow)\ldots))$,
where $\uparrow^0$ means $\textsf{id}$.  Additionally, de Bruijn indices
$n^f$ with $n>1$ are short-hand for $1^f[\uparrow^{n-1}]$.
The context linearity flags and the corresponding assumptions in
contexts are denoted \emph{intuitionistic} ($\fI$),
\emph{affine} ($\fA$), \emph{used affine} ($\fUA$),
\emph{linear} ($\fL$), and \emph{used linear} ($\fUL$).

\begin{figure}[t]
\[
\ianc{}{\cdot = \cdot\Join\cdot}{}
\qquad
\ianc{\Gamma = \Gamma_1\Join\Gamma_2}{\Gamma,A^\fI = \Gamma_1,A^\fI\Join\Gamma_2,A^\fI}{}
\]
\[
\ianc{\Gamma = \Gamma_1\Join\Gamma_2}{\Gamma,A^\fUL = \Gamma_1,A^\fUL\Join\Gamma_2,A^\fUL}{}
\qquad
\ianc{\Gamma = \Gamma_1\Join\Gamma_2}{\Gamma,A^\fL = \Gamma_1,A^\fL\Join\Gamma_2,A^\fUL}{}
\qquad
\ianc{\Gamma = \Gamma_1\Join\Gamma_2}{\Gamma,A^\fL = \Gamma_1,A^\fUL\Join\Gamma_2,A^\fL}{}
\]
\[
\ianc{\Gamma = \Gamma_1\Join\Gamma_2}{\Gamma,A^\fUA = \Gamma_1,A^\fUA\Join\Gamma_2,A^\fUA}{}
\qquad
\ianc{\Gamma = \Gamma_1\Join\Gamma_2}{\Gamma,A^\fA = \Gamma_1,A^\fA\Join\Gamma_2,A^\fUA}{}
\qquad
\ianc{\Gamma = \Gamma_1\Join\Gamma_2}{\Gamma,A^\fA = \Gamma_1,A^\fUA\Join\Gamma_2,A^\fA}{}
\]
\caption{Context splitting\label{fig:ctx-join}}
\end{figure} 

\begin{figure}[t]
\[
\ianc{\textsf{nolin}(\Gamma)}{\Gamma,A^f\vdash 1^f\Rightarrow A}{}
\qquad
\ianc{\Gamma\vdash n^f\Rightarrow B\quad l\in\{\fI,\fA,\fUL,\fUA\}}{\Gamma,A^l\vdash (n+1)^f\Rightarrow B}{}
\qquad
\ianc{\Gamma\vdash M \Rightarrow a}{\Gamma\vdash
M\Leftarrow a}{}
\qquad
\ianc{\Gamma\vdash s:\Gamma_X}{\Gamma\vdash X[s] \Rightarrow  A_X}{}
\]
\[
\ianc{\Gamma\vdash M\Leftarrow A\quad \Gamma\vdash N\Leftarrow B}{\Gamma\vdash\langle
M,N\rangle \Leftarrow  A\aand B}{}
\qquad
\ianc{\Gamma\vdash M\Rightarrow A\aand B}{\Gamma\vdash\textsf{fst}\;M \Rightarrow A}{}
\qquad
\ianc{\Gamma\vdash M\Rightarrow A\aand B}{\Gamma\vdash\textsf{snd}\;M \Rightarrow B}{}
\]
\[
\ianc{\Gamma,A^\fL\vdash M \Leftarrow B}{\Gamma\vdash\widehat{\lambda}M
\Leftarrow 
A\multimap B}{}
\qquad
\ianc{\Gamma=\Gamma_1\Join\Gamma_2\quad \Gamma_1\vdash M\Rightarrow A\multimap B
\quad \Gamma_2\vdash N \Leftarrow  A}{\Gamma\vdash M\lhat{}N \Rightarrow  B}{}
\]
\[
\ianc{\Gamma,A^\fA\vdash M \Leftarrow B}{\Gamma\vdash\afflam M
\Leftarrow 
A\affarr B}{}
\qquad
\ianc{\Gamma=\Gamma_1\Join\Gamma_2\quad\textsf{nolin}(\Gamma_2)\quad
\Gamma_1\vdash M\Rightarrow A\affarr B
\quad \Gamma_2\vdash N \Leftarrow  A}{\Gamma\vdash M\affapp N \Rightarrow  B}{}
\]
\[
\ianc{\Gamma,A^\fI\vdash M \Leftarrow B}{\Gamma\vdash\lambda M \Leftarrow  A\rightarrow B}{}
\qquad
\ianc{\Gamma\vdash M\Rightarrow A\rightarrow B\quad \overline{\Gamma}\vdash N \Leftarrow  A}{\Gamma\vdash M\;N \Rightarrow  B}{}
\]
\caption{Bidirectional typing of terms in canonical form\label{fig:typ-term}}
\end{figure} 

\begin{figure}[t]
\[
\ianc{}{\cdot\vdash\;\uparrow^0\;:\cdot}{}
\qquad
\ianc{\Gamma\vdash\;\uparrow^n\; : \Gamma'\quad
l\in\{\fI,\fA,\fUL,\fUA\}}{\Gamma,A^l\vdash\;\uparrow^{n+1}\;:\Gamma'}{}
\qquad
\ianc{\overline{\Gamma}\vdash M\Leftarrow A\quad
\Gamma\vdash s:\Gamma'}{\Gamma\vdash M^\fI.s :
\Gamma',A^\fI}{}
\]
\[
\ianc{\Gamma = \Gamma_1\Join\Gamma_2\quad
\Gamma_1\vdash M\Leftarrow A\quad
\Gamma_2\vdash s:\Gamma'}{\Gamma\vdash M^\fL.s :
\Gamma',A^\fL}{}
\qquad
\ianc{\underline{\Gamma}\vdash_i M \Leftarrow A\quad \Gamma\vdash
s:\Gamma'}{\Gamma\vdash M^\fL.s :
\Gamma',A^\fUL}{}
\]
\[
\ianc{\Gamma = \Gamma_1\Join\Gamma_2\quad\textsf{nolin}(\Gamma_1)\quad
\Gamma_1\vdash M\Leftarrow A\quad
\Gamma_2\vdash s:\Gamma'}{\Gamma\vdash M^\fA.s :
\Gamma',A^\fA}{}
\qquad
\ianc{\underline{\Gamma}\vdash_i M \Leftarrow A\quad \Gamma\vdash
s:\Gamma'}{\Gamma\vdash M^\fA.s :
\Gamma',A^\fUA}{}
\]
\caption{Typing of substitutions\label{fig:typ-sub}}
\end{figure} 

In this paper we will work exclusively with the corresponding calculus of canonical
forms and hereditary substitutions.  This can be obtained simply by
viewing each term as a short-hand for its unique normal form and
assuming that everything is fully $\eta$-expanded.  The resulting type
system is shown in Figures~\ref{fig:ctx-join}--\ref{fig:typ-sub}.
We write $\Gamma\vdash M:A$ as a shorthand for either $\Gamma\vdash M\Leftarrow A$
or $\Gamma\vdash M\Rightarrow A$.

The intuitionistic part of a context $\overline{\Gamma}$ is formed by
rendering all linear and affine variables unavailable, which corresponds to updating
the context linearity flags from
$\fL$ to $\fUL$ and $\fA$ to $\fUA$.
Similarly, the largest context that can split to a given context is
denoted $\underline{\Gamma}$ and constructed by changing every
$\fUL$ to $\fL$ and $\fUA$ to $\fA$.
The predicate $\textsf{nolin}(\Gamma)$ specifies that no linear
assumptions occur in $\Gamma$, i.e.\ no flag in $\Gamma$ is equal to
$\fL$.  The relaxed typing judgment $\Gamma\vdash_i M:A$ is
similar to $\Gamma\vdash M:A$ except that it makes all variables
available everywhere disregarding linearity and affineness.
The typing judgments could be augmented with an additional kind of
context for looking up logic variables, but we will keep this lookup
implicit and simply write $\Gamma_X$ and $A_X$ for the context and type
of a logic variable $X$.

\begin{figure}
\begin{align*}
(\widehat{\lambda}M)\lhat{}N &= M[N^\fL.\textsf{id}] & (\widehat{\lambda}M)[s] &= \widehat{\lambda}(M[1^{\fL\fL}.(s\;\circ\uparrow)]) & 1^f[M^{f}.s] &= M \\
(\afflam M)\affapp N &= M[N^\fA.\textsf{id}]         & (\afflam M)[s] &= \afflam (M[1^{\fA\fA}.(s\;\circ\uparrow)])                   & M[\textsf{id}] &= M \\
(\lambda M)\;N &= M[N^\fI.\textsf{id}]               & (\lambda M)[s] &= \lambda (M[1^{\fI\fI}.(s\;\circ\uparrow)])                   & \textsf{id}\circ s &= s \\
\textsf{fst}\langle M,N\rangle &= M                  & (M\lhat{}N)[s] &= M[s]\lhat{}N[s]                                              & s\circ\textsf{id} &= s \\
\textsf{snd}\langle M,N\rangle &= N                  & (M\affapp N)[s] &= M[s]\affapp N[s]                                            & \uparrow\circ\; (M^f.s) &= s \\
\langle M,N\rangle[s] &= \langle M[s],N[s]\rangle    & (M\;N)[s] &= M[s]\;N[s]                                                        & (M^f.s)\circ t &= M[t]^f.(s\circ t) \\
(\textsf{fst}\;M)[s] &= \textsf{fst}\;(M[s])         & M[s][t] &= M[s\circ t]                                                         & (s_1\circ s_2)\circ s_3 &= s_1\circ (s_2\circ s_3) \\
(\textsf{snd}\;M)[s] &= \textsf{snd}\;(M[s])         & X[s][t] &= X[s\circ t]                                                         & \uparrow^n &= (n+1)^{ff}.\uparrow^{n+1}
\end{align*}
\caption{Equalities\label{fig:equalities}}
\end{figure} 

Restricting ourselves to canonical forms while retaining the syntax of
redices and closures as short-hands for their corresponding normal forms
induces equalities corresponding to the rewrite rules of the original
system.  The induced equalities are shown in Figure~\ref{fig:equalities}.
Additionally, the two typing rules for $M[s]$ and $s_1\circ s_2$ from~\cite{SchackNielsen10ijcar},
which are left out, are now simply admissible rules proving type
preservation of hereditary substitution:
\[
\ianc{\Gamma\vdash s:\Gamma'\quad \Gamma'\vdash M :A}{\Gamma\vdash
M[s]:A}{}
\qquad
\ianc{\Gamma\vdash s_2:\Gamma''\quad\Gamma''\vdash
s_1:\Gamma'}{\Gamma\vdash s_1\circ s_2:\Gamma'}{}
\]

We use \emph{spine notation}~\cite{Cervesato03jlc} as a convenient short-hand for series of
applications and projections:
\[
S::= () \mid M;S \mid M\affext S \mid M\lhat{;}S \mid \textsf{fst};S \mid
\textsf{snd};S
\]
The term $M\cdot S$ is short-hand for the term where all the terms and
projections in $S$ are applied to $M$ as follows:
\begin{align*}
M\cdot () &= M & M\cdot (N\affext S) &= (M\affapp N)\cdot S &
  M\cdot (\textsf{fst};S) &= (\textsf{fst}\;M)\cdot S \\
M\cdot (N;S) &= (M\;N)\cdot S & M\cdot (N\lhat{;}S) &= (M\lhat{}N)\cdot S &
  M\cdot (\textsf{snd};S) &= (\textsf{snd}\;M)\cdot S
\end{align*}
We write $S[s]$ for the argumentwise application of $s$ in $S$ and observe
that $(M\cdot S)[s] = M[s]\cdot S[s]$.

We write $[X\leftarrow N]M$ for the \emph{instantiation of the logic variable} $X$ with term $N$ in
term $M$.  This instantiation is type preserving, which follows 
by induction on $M$ and the subject reduction property of hereditary substitutions.
\begin{thm}\label{thm:modalcut}
If $\Gamma_X\vdash N: A_X$ and $\Gamma\vdash M:A$ then
$\Gamma\vdash [X\leftarrow N]M:A$.
\end{thm}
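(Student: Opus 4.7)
The plan is to prove the statement by induction on the structure of $M$, together with a simultaneous statement for substitutions: if $\Gamma_X\vdash N:A_X$ and $\Gamma\vdash s:\Gamma'$ then $\Gamma\vdash [X\leftarrow N]s:\Gamma'$. The induction principle is justified by the size of the canonical term/substitution, since instantiation is defined so as to commute with all constructors except when it hits the head $X[\cdot]$.

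All the constructor cases are routine. For a term whose head constructor is not $X[\cdot]$, the instantiation commutes with the constructor, e.g.\ $[X\leftarrow N](\widehat{\lambda}M')=\widehat{\lambda}([X\leftarrow N]M')$, and similarly for $\afflam$, $\lambda$, pairs, projections, and the three applications $M_1\,M_2$, $M_1\lhat{}M_2$, $M_1\affapp M_2$. For each such case I invert the typing rule of Figure~\ref{fig:typ-term}, apply the induction hypothesis to the premises, and reassemble the derivation with the same rule. The context-splitting premises $\Gamma=\Gamma_1\Join\Gamma_2$ transfer verbatim, and the hypothesis $\Gamma_X\vdash N:A_X$ continues to hold when recursing into either $\Gamma_1$ or $\Gamma_2$, since $\Gamma_X$ is independent of $\Gamma$. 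The binder cases extend the context by the bound assumption before invoking the induction hypothesis. The variable case $1^f$ contains no occurrence of $X$, so instantiation leaves it unchanged. The substitution cases $\textsf{id}$, $\uparrow$, $M^f.s'$, and $s_1\circ s_2$ are handled analogously using Figure~\ref{fig:typ-sub} and the admissible composition rule stated at the end of the language section.

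The interesting case is $M=X[s]$. By inversion of the last rule in Figure~\ref{fig:typ-term}, $\Gamma\vdash s:\Gamma_X$ and $A=A_X$. The induction hypothesis for substitutions gives $\Gamma\vdash [X\leftarrow N]s:\Gamma_X$. By definition, $[X\leftarrow N](X[s])=N[[X\leftarrow N]s]$, which is a hereditary substitution on a strictly smaller logic-variable-free head. Now I invoke the admissible typing rule for $M[s]$ recalled in the paper, using $\Gamma_X\vdash N:A_X$ together with $\Gamma\vdash [X\leftarrow N]s:\Gamma_X$, to conclude $\Gamma\vdash N[[X\leftarrow N]s]:A_X$, which is the required judgment.

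The main obstacle is precisely the $X[s]$ case, where the linearity and affineness flags of $\Gamma$ must be respected once we collapse $X[s]$ to $N[\cdot]$; this is not a property of the syntactic substitution $[X\leftarrow N]$ alone but of the hereditary substitution machinery, and it is exactly the content of the admissible $M[s]$ rule, i.e.\ subject reduction for hereditary substitutions, proved in \cite{SchackNielsen10ijcar}. Once that admissible rule is in hand, the present proof reduces to a bookkeeping induction over the canonical syntax.
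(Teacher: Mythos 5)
Your proof is correct and takes essentially the same approach as the paper, which gives only a one-line justification: induction on $M$ together with the subject reduction property of hereditary substitutions (the admissible typing rule for $M[s]$). Your handling of the $X[s]$ case via that admissible rule, with the routine commuting cases and the simultaneous statement for substitutions, is exactly the intended argument.
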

Theorem~\ref{thm:modalcut} is also called the contextual modal cut
admissibility theorem for linear and affine contextual modal logic.

\section{Patterns}\label{sec:inversion}
The hallmark characteristic of the intuitionistic pattern fragment is
the invertibility of substitutions~\cite{Dowek98tr}.  Our pattern
fragment for the linear and affine calculus that we are going to
introduce next continues to guarantee this important property.

Consider a substitution
$\Gamma\vdash a_1^{f_1}\ldots a_p^{f_p}.\uparrow^n\;:\Gamma'$.
Assume that $a_j$ is a variable $n_j^{f'_j}$.  We say the
substitution extension
$n_j^{f'_j f_j}$ is
\emph{linear} if
$f'_j f_j={\fL\fL}$, it is \emph{affine} if 
$f'_j f_j={\fA\fA}$, it is \emph{intuitionistic} if 
$f'_j f_j={\fI\fI}$, and it is \emph{linear-changing} if 
$f'_j f_j={\fI\fL}$, $f'_j f_j={\fI\fA}$, or $f'_j f_j={\fA\fL}$.
Notice that the possibilities $\fL\fI$, $\fA\fI$, and $\fL\fA$
cannot occur in well-typed substitutions since this would imply
referencing a linear or affine assumption in an intuitionistic context or
a linear assumption in an affine context.

\begin{defin} \label{def:patsub}
A substitution $\Gamma\vdash a_1^{f_1} \ldots a_p^{f_p}.\uparrow^n\;:\Gamma'$
is said to be a \emph{pattern substitution} if all the terms $a_j$ for
$j\in \{1,\ldots,p\}$ are distinct de Bruijn indices and none of them
are linear-changing extensions in the substitution.
A pattern substitution is called a \emph{weakening substitution} if the
indices $a_j$ form an increasing sequence.
\end{defin}

Note that in a pattern substitution all de Bruijn indices are less than or equal to $n$ since $n$ is
equal to the length of $\Gamma$.   To understand pattern
substitutions in the presence of logic variables during
lowering (discussed in Section~\ref{sec:unification}), we define the
\emph{extension of pattern substitution $s$ by spine $S$}, written as $S.s$:
\begin{align*}
().s &= s &              (N\affext S).s &= S.(N^\fA.s) & (\textsf{fst};S).s &= S.s \\
(N;S).s &= S.(N^\fI.s) & (N\lhat{;}S).s &= S.(N^\fL.s) & (\textsf{snd};S).s &= S.s
\end{align*}


\begin{defin}
A term $M$ is said to be a \emph{pattern} or within the
\emph{pattern fragment} if all
occurrences of logic variables $X[s]\cdot S$ satisfy the property that
the substitution $S.s$ is a pattern substitution.
\end{defin}

Recall example~\eqref{eq:split2} from the introduction.  In our system,
the equation is written as
$F[\uparrow^1]\cdot(1^{\fI}\lhat{;}())\doteq
c\cdot(H_1[\uparrow^1]\cdot(1^{\fI};())\lhat{;}H_2[\uparrow^1]\cdot(1^{\fI};())\lhat{;}())$.
We observe that it is not a pattern since there is a linear-changing substitution
extension on the left-hand side in $(1^{\fI}\lhat{;}()).\uparrow^1=1^{\fI\fL}.\uparrow^1$.

It can be proven that the pattern fragment is stable under hereditary
substitution, logic variable instantiation, and inversion of
substitutions.  In particular, the following two theorems hold:

\begin{thm}
The pattern fragment is stable under logic variable instantiation.
I.e.\ for any patterns $M$ and $N$, $[X\leftarrow N]M$ is a pattern.
\end{thm}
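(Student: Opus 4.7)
The plan is structural induction on $M$. For the non-logic-variable cases ($M = 1^f$, $M = \widehat\lambda M'$, $M = \afflam M'$, $M = \lambda M'$, $M = \langle M_1,M_2\rangle$, and the head-plus-spine cases $c\cdot S$ or $n^f\cdot S$), the pattern condition is local to logic-variable occurrences and the induction hypothesis suffices, since $[X\leftarrow N]$ just pushes into the immediate subterms without altering the head. For $M = Y[s]\cdot S$ with $Y\neq X$, the pattern property of $M$ forces every entry of $s$ and every non-projection argument of $S$ to be a de Bruijn index; these contain no occurrence of $X$, so $[X\leftarrow N]s = s$ and $[X\leftarrow N]S = S$, and the result is literally $M$.

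The interesting case is $M = X[s]\cdot S$, which under $[X\leftarrow N]$ hereditarily reduces to $N[s]\cdot S$. I would prove this is a pattern via two auxiliary lemmas. A \emph{composition lemma}: if $t_1$ and $t_2$ are pattern substitutions then $t_1\circ t_2$ is a pattern substitution. Each entry of $t_1$ is a de Bruijn index $m^{f'}$, and applying $t_2$ produces another de Bruijn index; distinctness transfers because pattern substitutions act as injective renamings, and the non-linear-changing condition $f'f\in\{\fI\fI,\fA\fA,\fL\fL\}$ is preserved by pairwise flag composition. A \emph{substitution lemma}: if $P$ is a pattern and $t$ is a pattern substitution, then $P[t]$ is a pattern. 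The only nontrivial case is an occurrence $Z[t']\cdot T'$ inside $P$, which becomes $Z[t'\circ t]\cdot T'[t]$; a short induction on $T'$ gives the identity $T'[t].(t'\circ t) = (T'.t')\circ t$, so the effective substitution is a composition of two pattern substitutions, handled by the composition lemma.

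To finish $M = X[s]\cdot S$, I would track the hereditary reduction of $N[s]\cdot S$ step by step. Each $\beta$-step extends the substitution applied to a subterm of $N$ by the next non-projection argument from $S$, while projections select pair components. The accumulated substitution at every intermediate stage is an initial segment of $S.s$, which is a pattern substitution by hypothesis, hence itself pattern; the substitution lemma then keeps us in the pattern fragment throughout, so the normal form $N[s]\cdot S$ is a pattern. I anticipate the main obstacle will be the composition lemma, specifically the case analysis for indices of $t_1$ falling inside the explicit extensions versus the $\uparrow^n$ tail of $t_2$, together with the flag-compatibility bookkeeping that is characteristic of substructural systems.
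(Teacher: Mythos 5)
Your proof is correct and is essentially the argument the paper intends: the authors give no explicit proof, deferring to the standard intuitionistic argument of Dowek et al., and your induction on $M$ via the two auxiliary lemmas (pattern substitutions closed under composition, patterns closed under pattern substitution, with the identity $T'[t].(t'\circ t)=(T'.t')\circ t$ and the flag check that non-linear-changing extensions compose to non-linear-changing ones) is precisely the ``straight-forward extension'' they have in mind.
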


\begin{thm}
If $s$ is a pattern substitution and $M[s]$ is a
pattern then $M$ is a pattern.
\end{thm}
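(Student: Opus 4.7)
The plan is to proceed by induction on the structure of $M$. In the constructor cases (pairs, projections, $\widehat{\lambda}M'$, $\afflam M'$, $\lambda M'$, and spines headed by a de Bruijn index) the argument is routine: I push $s$ through the constructor using the equalities of Figure~\ref{fig:equalities}, which in the binder cases replaces $s$ by $1^{ff}.(s\circ\uparrow)$ for the appropriate $f\in\{\fL,\fA,\fI\}$; I then verify that this extended substitution is still a pattern substitution (the fresh index $1$ cannot collide with the shifted entries of $s\circ\uparrow$, and the doubled flag $ff$ is never linear-changing), and apply the induction hypothesis to each immediate sub-term.

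The essential case is $M = X[t]\cdot S$. Here $M[s] = X[t\circ s]\cdot S[s]$, so the assumption that $M[s]$ is a pattern yields both (a) each argument occurring in $S[s]$ is a pattern, whence the induction hypothesis gives the same for each argument of $S$, and (b) the substitution $(S[s]).(t\circ s)$ is a pattern substitution. For (b), I would first establish the auxiliary identity
\[
(S.t)\circ s \;=\; (S[s]).(t\circ s)
\]
by induction on $S$, using the defining clauses of $S.t$ together with the composition rules of Figure~\ref{fig:equalities}. This reduces the remaining goal to the following reflection lemma: if both $s$ and $u\circ s$ are pattern substitutions, then so is $u$ (applied here with $u = S.t$).

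To prove the reflection lemma, write $s = a_1^{f_1}\ldots a_p^{f_p}.\uparrow^n$ and $u = b_1^{g_1}\ldots b_q^{g_q}.\uparrow^m$, so that $u\circ s = b_1[s]^{g_1}\ldots b_q[s]^{g_q}.(\uparrow^m\circ s)$. Since $s$ substitutes only de Bruijn indices for de Bruijn indices and we are working in canonical form, each $b_j[s]$ is a de Bruijn index only when $b_j$ already is; moreover, the partial injectivity of $s$ on indices lifts distinctness of the $b_j[s]$ back to distinctness of the $b_j$. The tail $\uparrow^m\circ s$ evaluates to a substitution of the same shape, and the pattern assumption on $u\circ s$ forces its entries to avoid the $b_j[s]$, yielding a pattern substitution for $u$ itself.

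The main obstacle I expect is the bookkeeping on linearity flags in the reflection lemma: the flag of the entry $b_j[s]$ in $u\circ s$ is determined by combining $g_j$ with the flag on the entry of $s$ selected by $b_j$, and I must rule out a linear-changing choice for $g_j$ using the non-linear-changing hypotheses on both $s$ and $u\circ s$, together with the typing constraint that the combinations $\fL\fI$, $\fA\fI$, and $\fL\fA$ cannot arise in well-typed substitutions.
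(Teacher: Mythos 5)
Your proof is correct and follows the route the paper intends: the paper gives no details for this theorem, merely noting that the argument is a straightforward extension of the intuitionistic case in Dowek et al., and that argument is exactly your induction on $M$, with the logic-variable case reduced via the identity $(S.t)\circ s = (S[s]).(t\circ s)$ to the reflection fact that if $s$ and $u\circ s$ are pattern substitutions then so is $u$. One small simplification for your anticipated obstacle: composition preserves the extension flag $g_j$ verbatim, and the variable flag carried by the index $b_j[s]$ equals that of $b_j$ (the lookup is flag-matched by typing and the selected entry of $s$ is non-linear-changing), so the linear-changing status of $b_j^{g_j}$ in $u$ and of $b_j[s]^{g_j}$ in $u\circ s$ coincide outright and no separate appeal to the impossibility of $\fL\fI$, $\fA\fI$, $\fL\fA$ is needed.
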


The proofs are relatively straight-forward extensions of the proofs given
in~\cite{Dowek98tr} for the intuitionistic pattern fragment.

Next, we define the inverse of a pattern substitution. The name is
justified by Theorem~\ref{thm:inverse} below.
\begin{defin}
Let $s=a_1^{f_1}\ldots a_p^{f_p}.\uparrow^n$ be a pattern substitution.
We define its inverse to be
$s^{-1}=e_1^{g_1}\ldots e_n^{g_n}.\uparrow^p$ where
$e_j^{g_j}=i^{f_i f_i}$
when $a_i=j^{f_i}$ and $e_j$ is undefined otherwise.
The undefined extensions $e_j^{g_j}$ are flagged  intuitionistic,
affine, or linear depending on the $j$th assumption in the codomain
of $s$.
\end{defin}
Intuitively, this definition is well defined: the $a_i$s are
distinct and less than or equal to $n$.  For the undefined $e_j$ one can think
of an arbitrary term of the right type, e.g.\ a freshly created logic variable.


In the following we will refer to affine weakening on contexts $\Gamma\affweak\Gamma'$, which is defined as
\[
\Gamma\affweak\Gamma' \;\equiv\;
\exists \Gamma''.\;\;\Gamma=\Gamma''\Join\Gamma'\;\land\;\textsf{nolin}(\Gamma'')
\]
Notice that affine weakening is reflexive and transitive, as it merely
amounts to changing some number of $\fA$s into $\fUA$s.

\begin{lem}
For a pattern substitution $\Gamma_2\vdash s:\Gamma'$ there exists a
$\Gamma_1$ with $\Gamma_2\affweak\Gamma_1$ such that
$\Gamma_1\vdash s:\Gamma'$ and the inverse is well-typed with
$\Gamma'\vdash s^{-1}:\Gamma_1$.
\end{lem}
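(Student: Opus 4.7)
The plan is to construct $\Gamma_1$ explicitly and verify the three claims in turn. Let $R = \{a_1,\ldots,a_p\}$ be the set of de Bruijn indices referenced by $s$. I define $\Gamma_1$ to agree with $\Gamma_2$ at every position except at those $j\notin R$ where $\Gamma_2$ has flag $\fA$, in which case I set $\Gamma_1$'s flag at $j$ to $\fUA$. Affine weakening $\Gamma_2\affweak\Gamma_1$ then follows by exhibiting $\Gamma''$ position-wise according to the splitting rules of Figure~\ref{fig:ctx-join}: in particular $\Gamma''$ takes $\fUL$ at $\fL$ entries, $\fUA$ at referenced $\fA$ entries, and $\fA$ at the unreferenced $\fA$ entries (so that $\Gamma_1$ absorbs $\fUA$). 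By construction no $\fL$ appears in $\Gamma''$, hence $\textsf{nolin}(\Gamma'')$ holds.

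For $\Gamma_1\vdash s:\Gamma'$ I would strengthen the given derivation of $\Gamma_2\vdash s:\Gamma'$. The only positions where $\Gamma_1$ differs from $\Gamma_2$ are unreferenced affine ones, which by construction no $a_i$ uses. At every context split along the derivation, these positions can be routed as $\fUA\Join\fUA$ into both summands, and the base $\uparrow^n$ still types since its rule admits $\fUA$ among the skipped flags. A crucial side observation is that the exclusion of linear-changing extensions in pattern substitutions, together with distinctness, forces $\Gamma_2[j]=\fL$ to imply $j\in R$ and $\Gamma_2[j]=\fUL$ to imply $j\notin R$, so the case analysis above is exhaustive.

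The third claim, $\Gamma'\vdash s^{-1}:\Gamma_1$, is the main obstacle. I would build the derivation by applying the extension rules of Figure~\ref{fig:typ-sub} from outside in, processing $e_1,\ldots,e_n$ in order. When $j\in R$ with $e_j=i^{f_i}$ and $f_i\in\{\fL,\fA\}$, the standard splitting rule applies: I choose the split so that the $i$-th entry of the remaining context, which still carries flag $f_i$, is routed to the term side while all other linear positions on that side are simultaneously marked $\fUL$, so that the variable lookup of $i^{f_i}$ succeeds. Because the $a_i$'s are distinct, each entry of $\Gamma'$ is demanded by exactly one such $e_j$, making these successive splits globally consistent. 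When $j\notin R$, the construction of $\Gamma_1$ guarantees $\Gamma_1[j]\in\{\fI,\fUA,\fUL\}$, so the undefined $e_j$, taken as a fresh logic variable of appropriate type, can be typed using the intuitionistic rule or the non-splitting relaxed rule targeting $\fUA$ or $\fUL$; no context resource is consumed and the splits at referenced extensions are undisturbed. Once all $n$ extensions are peeled off, every linear and affine entry of $\Gamma'$ has been marked used in the residual context, so the innermost $\uparrow^p$ types there against the leftmost portion of $\Gamma_1$, which is unchanged from $\Gamma_2$.
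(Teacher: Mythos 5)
Your overall strategy is the same as the paper's (construct $\Gamma_1$ explicitly by deactivating affine assumptions of $\Gamma_2$, then verify the three claims), but your definition of $\Gamma_1$ is too coarse, and the third claim genuinely fails for it. You only downgrade $\fA$ to $\fUA$ at positions $j\notin R$. However, a position $j$ with flag $\fA$ in $\Gamma_2$ can be \emph{referenced} by an extension $a_i=j^{\fA}$ whose codomain flag is $l'_i=\fUA$: that extension is typed by the relaxed rule $\underline{\Gamma}\vdash_i M\Leftarrow A$ and consumes nothing, so the typing of $s$ does not force $j$ to be downgraded, yet $j\in R$. Concretely, $\cdot,a^{\fA}\vdash 1^{\fA}.\uparrow^1\;:\;\cdot,a^{\fUA}$ is a well-typed pattern substitution. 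Your $\Gamma_1$ keeps flag $\fA$ at such a $j$, so the inverse's $j$th extension $e_j=i^{\fA}$ must be typed against target flag $\fA$, which requires splitting off a $\textsf{nolin}$ portion of $\Gamma'$ in which the $i$th assumption is available as $\fA$ --- but that assumption carries $\fUA$ in $\Gamma'$ and every split of a $\fUA$ entry yields $\fUA$ on both sides. This is precisely where your step ``the $i$-th entry of the remaining context, which still carries flag $f_i$, is routed to the term side'' breaks: when $f_i=\fA$ and $l'_i=\fUA$ the entry does not carry flag $f_i$. The paper's construction therefore also sets the $j$th flag of $\Gamma_1$ to $\fUA$ in this case (legitimate, since the extension was typed relaxedly and consumed nothing), after which $e_j$ is typed by the relaxed rule targeting $\fUA$.

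A smaller inaccuracy: your ``crucial side observation'' that a $\fUL$ flag at $j$ in $\Gamma_2$ implies $j\notin R$ is false --- a used-linear assumption may be referenced by an extension whose codomain flag is $\fUL$, again via the relaxed rule. This happens not to damage your construction (the flag is $\fUL$ either way and the inverse extension types relaxedly), but it is the same relaxed-rule phenomenon that you overlooked in the affine case, where it does matter. The repair is to set the $j$th flag of $\Gamma_1$ to $\fUA$ whenever it is $\fA$ in $\Gamma_2$ and $j$ is not referenced by an extension whose \emph{codomain} flag is $\fA$; with that corrected definition the remainder of your argument goes through essentially as in the paper.
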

\begin{proof}
Let $s=a_1^{f_1}\ldots a_p^{f_p}.\uparrow^n$.
Then $\Gamma_2=\cdot,B_n^{l^2_n},\ldots,B_1^{l^2_1}$
and $\Gamma'=\cdot,A_p^{l'_p},\ldots,A_1^{l'_1}$.  Intuitively we are
going to take $\Gamma_1$ to be the smallest possible such that $s$ is
still well-typed, i.e.\ we are going to make all the affine assumptions
that are not used in $s$ unavailable.  More formally we are going to set
$\Gamma_1=\cdot,B_n^{l^1_n},\ldots,B_1^{l^1_1}$ where $l^1_j=l^2_j$
when $l^2_j\in\{\fI,\fL,\fUL,\fUA\}$.  When $l^2_j=\fA$ the $l^1_j$ will
be defined below.

Consider each variable $a_i^{f_i}=j^{f_i f_i}$ in $s$.
Note that we have $A_i=B_j$.
If $l'_i=f$ where $f$ is either $\fI$ or $\fL$ then we have
$f_i=f$ and $l^2_j=l^1_j=f$.
In the case where $l'_i=\fUL$ then $f_i=\fL$ and $l^2_j=l^1_j$
are either equal to $\fUL$ or $\fL$, but since all the variables in $s$
are distinct it has to be $\fUL$.
If $l'_i=\fA$ then $f_i=\fA$ and $l^2_j=\fA$, and in this case we set
$l^1_j=\fA$.
Finally, if $l'_j=\fUA$ then $f_i=\fA$ and $l^2_j$ is either $\fUA$ or
$\fA$.  If $l^2_j=\fUA$ then $l^1_j$ is also equal to $\fUA$, and if
$l^2_j=\fA$ then we can set $l^1_j=\fUA$ since $j$ does not occur
anywhere else in $s$.  This means that for all defined extensions
$e_j=i^{f_i}$ in $s^{-1}$ we have $l'_i=l^1_j$.

The remaining $B_j^{l^2_j}$s for which there are no $a_i=j^{f_i}$ are
all shifted away by the $\uparrow^n$ part of $s$.  Therefore none of them
can be linear, and if any of them are affine, i.e.\ have $l^2_j=\fA$,
we set $l^1_j=\fUA$.
This means that all the undefined extensions in $s^{-1}$ correspond to
intuitionistic, used linear, or used affine assumptions in $\Gamma_1$,
and we see that 
$s^{-1}$ indeed is well-typed with $\Gamma'\vdash s^{-1}:\Gamma_1$.
\end{proof}
\begin{thm}\label{thm:inverse}
Given a pattern substitution $\Gamma\vdash s:\Gamma'$, we have
$\Gamma'\vdash s\circ s^{-1}:\Gamma'$ and
$s\circ s^{-1}=\textup{\textsf{id}}$.
\end{thm}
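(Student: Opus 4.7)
The plan is to prove the two claims in sequence. For well-typedness, I will invoke the preceding lemma, which supplies a context $\Gamma_1$ with $\Gamma \affweak \Gamma_1$, $\Gamma_1 \vdash s : \Gamma'$, and $\Gamma' \vdash s^{-1} : \Gamma_1$. The admissible composition rule stated just after Figure~\ref{fig:equalities}, applied with $s_1 := s$ and $s_2 := s^{-1}$, then directly yields $\Gamma' \vdash s \circ s^{-1} : \Gamma'$.

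For the syntactic equality $s \circ s^{-1} = \textsf{id}$, I will compute the composition using the equalities in Figure~\ref{fig:equalities}. Writing $s = a_1^{f_1}.\ldots.a_p^{f_p}.\uparrow^n$, a $p$-fold application of $(M^f.s) \circ t = M[t]^f.(s \circ t)$ gives
\[
s \circ s^{-1} \;=\; a_1[s^{-1}]^{f_1}.a_2[s^{-1}]^{f_2}.\ldots.a_p[s^{-1}]^{f_p}.(\uparrow^n \circ s^{-1}).
\]
Since $s^{-1} = e_1^{g_1}.\ldots.e_n^{g_n}.\uparrow^p$ begins with $n$ cons-entries, $n$ uses of $\uparrow \circ (M^f.s) = s$ collapse the trailing factor to $\uparrow^p$. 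For each $i$, write $a_i = j_i^{f_i}$; unfolding $j_i^{f_i}$ as $1^{f_i}[\uparrow^{j_i-1}]$, composing with $s^{-1}$, peeling off $j_i-1$ leading entries, and applying $1^f[M^f.s] = M$ yields $a_i[s^{-1}] = e_{j_i} = i^{f_i}$, where the outer flag $g_{j_i} = f_i$ by the definition of the inverse and therefore matches the slot flag under which the access occurs. Hence $s \circ s^{-1} = 1^{f_1 f_1}.2^{f_2 f_2}.\ldots.p^{f_p f_p}.\uparrow^p$, and a $p$-fold application of $\uparrow^n = (n+1)^{ff}.\uparrow^{n+1}$ to $\textsf{id} = \uparrow^0$ brings it into exactly the same syntactic form.

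The main subtlety I expect is flag accounting. Because $s$ is a well-typed pattern substitution into $\Gamma'$, linear-changing entries are forbidden, so the inner and outer flags of each $a_i^{f_i}$ must coincide, and they are determined by the $i$-th context flag of $\Gamma'$ in the same way as the flag produced by expanding $\uparrow^0$ at that position. That is the step that guarantees component-by-component agreement between $s \circ s^{-1}$ and $\textsf{id}$, including in the used-linear and used-affine slots, where the relaxed substitution-typing rules of Figure~\ref{fig:typ-sub} justify using $\fL$ and $\fA$ as the outer slot flags for $\fUL$ and $\fUA$ positions on both sides of the equality.
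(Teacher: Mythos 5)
Your proposal is correct and follows essentially the same route as the paper: the core of both arguments is the componentwise computation $a_i[s^{-1}] = i^{f_i}$, using the fact that the $j$th entry of $s^{-1}$ is $i^{f_i}$ exactly when $a_i = j^{f_i}$. You simply make explicit the bookkeeping the paper leaves implicit — the typing via the preceding lemma plus the admissible composition rule, the collapse of $\uparrow^n \circ s^{-1}$ to $\uparrow^p$, and the flag agreement guaranteed by the absence of linear-changing extensions.
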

\begin{proof}
Let $s=a_1^{f_1}\ldots a_p^{f_p}.\uparrow^n$.
Since $a_i=j^{f_i}$ then the $j$th extension in $s^{-1}$ is equal to
$i^{f_i}$, and thus $a_i[s^{-1}] = i^{f_i}$ for all $i$.
\end{proof}

We have the usual definition of occurrence, rigid occurrence, and flexible
occurrence written as $\in$, $\in_{\textsf{rig}}$, and $\in_{\textsf{flex}}$
respectively.  These relations are only defined for canonical forms in
which all logic variables are of base type (lowering will achieve this).
Occurrence is defined as
$\in\;=\;\in_{\textsf{rig}}\cup\in_{\textsf{flex}}$.  Rigid and flexible
occurrence are defined as follows, where we write $\in_*$ for either
rigid or flexible occurrence.
\[
\ianc{}{n\in_{\textsf{rig}} n^f}{}
\quad
\ianc{n\in_{\textsf{rig}} s}{n\in_{\textsf{flex}} X[s]}{}
\quad
\ianc{a_i=n^f}{n\in_{\textsf{rig}} a_1^{f_1}\ldots a_p^{f_p}.\uparrow^m}{}
\quad
\ianc{n\in_* M_i}{n\in_* \langle M_1,M_2\rangle}{}
\quad
\ianc{n\in_* M}{n\in_* \textsf{fst}\; M}{}
\quad
\ianc{n\in_* M}{n\in_* \textsf{snd}\; M}{}
\]
\[
\ianc{n+1\in_* M}{n\in_* \widehat{\lambda}M}{}
\quad
\ianc{n+1\in_* M}{n\in_* \afflam M}{}
\quad
\ianc{n+1\in_* M}{n\in_* \lambda M}{}
\quad
\ianc{n\in_* M_i}{n\in_* M_1\lhat{}M_2}{}
\quad
\ianc{n\in_* M_i}{n\in_* M_1\affapp M_2}{}
\quad
\ianc{n\in_* M_i}{n\in_* M_1\;M_2}{}
\]
If $n\in_{\textsf{flex}} M$ then the definition implies that there is
some logic variable \mbox{$X[a_1^{f_1}\ldots a_p^{f_p}.\uparrow^m]$} in
$M$ beneath $k$ lambdas such that
$(n+k)^{f_i}=a_i$.  In this case we say that
$n$ occurs in the $i$th argument of $X$.

\begin{lem}\label{lem:linoccur}
Linearity implies occurrence.
\begin{enumerate}
\item
Let $\Gamma\vdash s:\Gamma'$ be a pattern substitution and the $n$th
assumption in $\Gamma$ be linear.  Then $n$ occurs in $s$.
\item
Let $\Gamma\vdash M:A$ be a pattern and let the $n$th
assumption in $\Gamma$ be linear.  Then $n$ occurs in $M$.
\end{enumerate}
\end{lem}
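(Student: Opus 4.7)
The plan is to prove the two statements simultaneously by mutual induction: part~(1) proceeds by induction on the structure of the pattern substitution $s$, and part~(2) by induction on the structure of the pattern term $M$. The two parts are genuinely interdependent: part~(2) calls part~(1) for the case $X[s]\cdot S$ (via the pattern substitution $S.s$), and part~(1) calls part~(2) whenever an extension $M^f.s'$ carries a head term whose typing context contains the linear assumption.

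For part~(1), write $s = a_1^{f_1}\ldots a_p^{f_p}.\uparrow^k$. I first note that the linear assumption at position~$n$ cannot be absorbed by the trailing $\uparrow^k$, because the shift rule in Figure~\ref{fig:typ-sub} admits only flags in $\{\fI,\fA,\fUL,\fUA\}$ at the head of the context. Hence the linear assumption must be accounted for by one of the extensions. The key ingredient is a short observation about the linear splitting rules of Figure~\ref{fig:ctx-join}: when $\Gamma,A^\fL = \Gamma_1,A^{l_1}\Join\Gamma_2,A^{l_2}$, exactly one of $l_1,l_2$ is $\fL$ and the other is $\fUL$, and analogously affine splittings turn affine into $\fUA$ on the unused side but preserve linearity of any \emph{other} assumption on the corresponding side. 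Thus in the extension cases $M^\fL.s'$ and $M^\fA.s'$, the linear assumption at position $n$ either remains linear in $\Gamma_1$, in which case the IH on part~(2) yields $n\in M$, or it remains linear in $\Gamma_2$, in which case the IH on part~(1) yields $n\in s'$. For $M^\fI.s'$ the head $M$ is typed under $\overline{\Gamma}$, which has no linear assumptions, so the linear assumption must be used in $s'$ and we appeal directly to the IH on part~(1).

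For part~(2), case analysis on $M$ follows the bidirectional typing judgment of Figure~\ref{fig:typ-term}. The variable cases $1^f$ and $(n+1)^f$ are immediate from the rules defining $\in_{\textsf{rig}}$. The logic variable case $X[s]$ reduces to part~(1) applied to~$s$. The binder cases $\widehat{\lambda}M$, $\afflam M$, and $\lambda M$ shift the linear assumption from position $n$ in $\Gamma$ to position $n+1$ under the binder, matching the $n{+}1\in_* M$ premise in the corresponding $\in_*$ rule. Pair introduction $\langle M,N\rangle$ types both components in the same $\Gamma$, so the IH gives an occurrence in either component; projections propagate the occurrence from the subterm. Application cases use the context-splitting observation above: for $M\lhat{}N$ the linear assumption lies linearly on exactly one side and the appropriate IH applies; for $M\affapp N$ the side condition $\textsf{nolin}(\Gamma_2)$ forces it into $\Gamma_1$, so the IH gives $n\in M$; for $M\;N$ the argument is typed under $\overline{\Gamma}$ which has no linear assumption, so again the IH gives $n\in M$.

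The main obstacle is the bookkeeping around context splitting, and in particular isolating the observation that a linear assumption on the left of $\Join$ is passed linearly to exactly one side. Once that small sub-lemma is stated and proved by a straightforward inspection of the rules in Figure~\ref{fig:ctx-join}, every remaining case is a routine unfolding of the typing and $\in_*$ rules, and the mutual recursion terminates because each call strictly decreases the size of the substitution or term being inspected.
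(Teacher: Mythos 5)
Your proof is correct and matches the paper's argument in substance: part~(1) rests on the observation that a linear assumption cannot be absorbed by the trailing shift, and part~(2) is an induction on $M$ driven by the fact that context splitting sends a linear assumption to exactly one side. The only genuine difference is that the mutual induction is unnecessary for part~(1): by Definition~\ref{def:patsub} the extensions of a pattern substitution are distinct de Bruijn indices $a_i^{f_i f_i}$ rather than arbitrary terms, so once the shift is ruled out, $n$ must literally be one of the $a_i$ and the occurrence rule for substitutions applies directly --- which also sidesteps the point that $n\in s$ is only defined via the clause $a_i=n^f$, not via occurrence inside a general extension term, so your appeal to part~(2) inside part~(1) would not even be well-posed outside the pattern fragment.
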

\begin{proof}
If $s=a_1^{f_1f_1}\ldots a_p^{f_pf_p}.\uparrow^m$ then we must have
$n=a_i$ for some $a_i$ since a linear assumption cannot be shifted away.
The second case is by induction on $M$.
\end{proof}

\begin{defin}
Given the typing of a substitution $\Gamma\vdash s:\Gamma'$ we will
call it \emph{strong} if there exists no $\Gamma''\neq\Gamma'$ such
that $\Gamma''\affweak\Gamma'$ and $\Gamma\vdash s:\Gamma''$.
\end{defin}
For a pattern substitution
$\cdot,B_n^{l_n},\ldots,B_1^{l_1}\vdash
a_1^{f_1}\ldots a_p^{f_p}.\uparrow^n :
\cdot,A_p^{l'_p},\ldots,A_1^{l'_1}$
we see that it is strong if and only if
for each affine variable $a_i=j^\fA$ we have $l'_i=\fUA$ implies
$l_j=\fUA$.

Consider the split of a strong pattern substitution
$\Gamma\vdash s:\Gamma'$ over a context split
$\Gamma'=\Gamma'_1\Join\Gamma'_2$ into $\Gamma_1\vdash s:\Gamma'_1$ and
$\Gamma_2\vdash s:\Gamma'_2$ with $\Gamma=\Gamma_1\Join\Gamma_2$.
For any used affine assumption in $\Gamma'_1$ the assumption is either
affine or used affine in $\Gamma'$ and $\Gamma'_2$.  If it is used affine then the
corresponding assumption is also used affine in $\Gamma$ and thereby
$\Gamma_1$.  If it is affine then the corresponding assumption has to be
affine in $\Gamma_2$ and is thereby used affine in $\Gamma_1$.  This
means that $\Gamma_1\vdash s:\Gamma'_1$ is strong and by symmetry so is
$\Gamma_2\vdash s:\Gamma'_2$.

\begin{thm}\label{thm:occurrence}
Let $\Gamma\vdash s:\Gamma'$ be a pattern substitution and
$\Gamma\vdash M:A$ be a term in which all logic variables
are of base type.
\begin{enumerate}
\item
If there exists a term $\Gamma'\vdash M':A$ such that $M=M'[s]$
then every variable occurring in $M$ also occurs in $s$.
\item
If the typing $\Gamma\vdash s:\Gamma'$ is strong and every variable
occurring in $M$ also occurs in $s$ then there exists a term
$\Gamma'\vdash M':A$ such that $M=M'[s]$.
\end{enumerate}
\end{thm}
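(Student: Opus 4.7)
The plan is to treat the two parts separately, the first by structural induction and the second by exhibiting an explicit witness through the inverse substitution constructed in Theorem~\ref{thm:inverse}.

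For part~1, I would induct on the structure of $M'$, pushing $s$ inward with the equalities of Figure~\ref{fig:equalities}. The variable case is decisive: when $M' = k^f$ with $k\leq p$, we get $M = a_k^{f_k}$, whose sole free index $a_k$ is by definition one of the entries of $s$. Binders produce extended substitutions such as $1^{\fL\fL}.(s\circ\uparrow)$, but the freshly introduced index is bound and therefore does not count as a free occurrence in the enclosing term, so the inductive hypothesis carries through. Logic-variable occurrences $X[t]\cdot S$ in $M$ arise from occurrences $X[t']\cdot S'$ in $M'$ via $t = t'\circ s$ and $S = S'[s]$, and by induction on each entry of $t'$ and each argument in $S'$ one sees that the variables appearing in $t$ and $S$ are all supplied by $s$.

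For part~2, the candidate witness is $M' := M[s^{-1}]$. Two syntactic points are needed. First, $M[s^{-1}]$ never consults an undefined placeholder extension of $s^{-1}$: whenever $M$ refers to some index $j$, the hypothesis yields $i$ with $a_i = j^{f_i}$, and by the definition of the inverse the $j$th extension of $s^{-1}$ is the defined value $i^{f_if_i}$. Second, the composition of the defined extensions of $s^{-1}$ with $s$ acts as the identity on exactly those indices used by $M$, so $M[s^{-1}][s] = M$ follows from a direct calculation with the composition equations.

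What remains---and where the strongness hypothesis is indispensable---is the typing $\Gamma'\vdash M[s^{-1}]:A$. The preceding lemma furnishes a context $\Gamma_1$ with $\Gamma\affweak\Gamma_1$, $\Gamma_1\vdash s:\Gamma'$, and $\Gamma'\vdash s^{-1}:\Gamma_1$. I would first lift $\Gamma\vdash M:A$ to $\Gamma_1\vdash M:A$: the two contexts differ only in certain affine positions marked used in $\Gamma_1$, which are precisely the affine positions not appearing in $s$ (by the lemma's construction) and thus, by hypothesis, also absent from $M$. Strongness then handles the genuinely used affines: for each affine entry $a_i = j^\fA$ of $s$, strongness forces $l'_i = \fA$, and the lemma's case analysis in turn forces $l^1_j = \fA$, so every affine referenced by $M$ remains live in $\Gamma_1$. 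Applying the admissible hereditary substitution rule to $\Gamma'\vdash s^{-1}:\Gamma_1$ and $\Gamma_1\vdash M:A$ then yields $\Gamma'\vdash M[s^{-1}]:A$. I expect this interplay between the lemma's construction, the strongness condition, and $M$'s affine occurrences to be the main obstacle; the remaining two observations in part~2 are essentially computations on extensions.
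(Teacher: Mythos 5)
Part~1 of your proposal coincides with the paper's proof, which is likewise a direct induction on $M'$. For part~2 you take a genuinely different route: the paper proceeds by induction on $M$, building $M'$ constructor by constructor and relying on the observations that context splits and the extension $1^{ff}.(s\circ\uparrow)$ under a binder preserve strong typings of $s$; you instead exhibit the explicit witness $M'=M[s^{-1}]$ and check that the undefined extensions of $s^{-1}$ are never consulted, that $s^{-1}\circ s$ acts as the identity on exactly the indices occurring in $M$, and that the typing follows from the inversion lemma together with the admissible substitution rule. Your route has the advantage of being precisely the computation performed by the \textbf{instantiation} rule, so the correctness of that rule falls out immediately; its cost is an auxiliary lemma the paper's induction does not need, namely that changing an $\fA$ to a $\fUA$ at a position not occurring in $M$ preserves $\Gamma\vdash M:A$ --- this is true but requires its own induction on the typing derivation and should be stated explicitly. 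One small correction: strongness does not force $l'_i=\fA$ for an affine extension $a_i=j^\fA$; it only guarantees that $l'_i=\fUA$ implies $l_j=\fUA$, so that in either case $l^1_j=l^2_j$ and $\Gamma_1$ agrees with $\Gamma$ at every position occurring in $s$ --- which is the fact your argument actually needs.
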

\begin{proof}
1.\ follows by induction on $M'$ and 2.\ by induction on $M$  using the
fact that context splits preserve a strong typing of $s$.  It is easy to see
that a strong typing of $s$ implies a strong typing of
$1^{ff}.(s\;\circ\uparrow)$ when going beneath a lambda-binder.

For the base case
$M=n^f$ we get that $n\in s$ implies that the $n$th assumption in
$\Gamma$ corresponds to an assumption, say the $m$th, in $\Gamma'$.
Now, we can take $M'=m^f$, and since $s$ is strong, availability of the
$n$th assumption in $\Gamma$ implies availability of the $m$th
assumption in $\Gamma'$ and thus that $M'$ is well-typed.
The base case $M=X[t]$ is similar, when noting that the shift at the end
of $s$ is equal to the shift at the end of $t$, since they
are both equal to the length of $\Gamma$.
\end{proof}

Theorem~\ref{thm:occurrence} states that occurrence is a
conservative approximation of the set of
variables occurring in any instantiation of a term,
i.e.\ if $n\in [X\leftarrow N]M$ then $n\in M$.  The opposite is
not necessarily true.

\section{Pattern unification}
A unification problem $P$ is a conjunction of unification
equations, and a solution to a unification problem is an instantiation
of the logic variables such that all equations are satisfied.  Such a
collection of logic variable instantiations will be written as $\theta$
and we say that $\theta$ solves $P$.  In this section we describe an
algorithm that returns ``no'' if no such solution exists or a
most general unifier otherwise, i.e.\ a solution that all other solutions are
refinements of.

More formally, we write $\Gamma\vdash M_1\doteq M_2 :A$ for a unification equation 
or simply $M_1\doteq M_2$ with the implicit understanding that both
terms have the same type in the same context.
Unification equations are symmetric and we will implicitly
switch from $M_1\doteq M_2$ to $M_2\doteq M_1$ when needed.  Unification
problems are given by the following grammar, where $\mathbb{T}$ is the
solved unification problem and $\mathbb{F}$ is the unification problem
with no solutions.
\[
P::= \mathbb{T} \mid \mathbb{F} \mid P\land (\Gamma\vdash M_1\doteq M_2 :A)
\]
For convenience we generalize unification equations to spines
and write $S_1 \doteq S_2 $ as a
short-hand for the argumentwise conjunction of unification equations
(see below).

\subsection{Unification algorithm}\label{sec:unification}
The unification algorithm consists of a set of
transformation rules of the form $P\mapsto P'$.   We will see 
that the repeated application of these rule to any unification problem 
will eventually terminate resulting in  either $\mathbb{F}$, which indicates that the original problem has
no solution, or $\mathbb{T}$, which indicates that all equations have been
solved and that a most general unifier has been found.  In this case the
most general unifier is a mapping from logic variables to their
instantiations as computed during the execution of the
algorithm.  The unification algorithm is given in
Figure~\ref{fig:pat-unif} and each rule is explained in detail below.
For convenience we write the decomposition of a term $M$ into one of
its subterms $N$ and the surrounding term with a hole $M'\{\cdot\}$ as $M=M'\{N\}$.


\begin{figure}
\[
{\newcommand{\spacer}{\rule{0pt}{3.1ex}}
\newcommand{\width}{.75\textwidth}
\begin{array}{llll}
\textbf{dec-lam-l} &
P\land\widehat{\lambda} M_1 \doteq \widehat{\lambda} M_2
&\mapsto& P\land M_1\doteq M_2 \\
\spacer\textbf{dec-lam-a} &
P\land\afflam M_1 \doteq \afflam M_2
&\mapsto& P\land M_1\doteq M_2 \\
\spacer\textbf{dec-lam-i} &
P\land\lambda M_1 \doteq \lambda M_2
&\mapsto& P\land M_1\doteq M_2 \\
\spacer\textbf{dec-pair} &
P\land\langle M_1,N_1\rangle \doteq \langle M_2,N_2\rangle
&\mapsto& P\land M_1\doteq M_2\land N_1\doteq N_2 \\
\spacer\textbf{dec-atomic-eq} &
P\land n^f\cdot S_1 \doteq n^f\cdot S_2
&\mapsto& P\land S_1\doteq S_2 \\
\spacer\textbf{dec-atomic-neq} &
P\land n^f\cdot S_1 \doteq m^{f'}\cdot S_2
&\mapsto& \mathbb{F} \\
&\multicolumn{3}{l}{
\textrm{if }n\neq m
} \\
\spacer\textbf{lower-lolli} &
P
&\mapsto& [X\leftarrow \widehat{\lambda} Y[\textsf{id}]]P \\
&\multicolumn{3}{p{\width}}{
if $A_X=A\multimap B$ and $Y$ is a fresh logic variable
with $A_Y=B$ and $\Gamma_Y=\Gamma_X,A^\fL$
} \\
\spacer\textbf{lower-affarr} &
P
&\mapsto& [X\leftarrow \afflam Y[\textsf{id}]]P \\
&\multicolumn{3}{p{\width}}{
if $A_X=A\affarr B$ and $Y$ is a fresh logic variable
with $A_Y=B$ and $\Gamma_Y=\Gamma_X,A^\fA$
} \\
\spacer\textbf{lower-arr} &
P
&\mapsto& [X\leftarrow \lambda Y[\textsf{id}]]P \\
&\multicolumn{3}{p{\width}}{
if $A_X=A\rightarrow B$ and $Y$ is a fresh logic variable
with $A_Y=B$ and $\Gamma_Y=\Gamma_X,A^\fI$
} \\
\spacer\textbf{lower-and} &
P
&\mapsto& [X\leftarrow \langle Y[\textsf{id}],Z[\textsf{id}]\rangle]P \\
&\multicolumn{3}{p{\width}}{
if $A_X=A\aand B$ and $Y$ and $Z$ are fresh logic variables
with $A_{Y}=A$, $A_{Z}=B$, $\Gamma_{Y}=\Gamma_X$, and $\Gamma_{Z}=\Gamma_X$
} \\
\spacer\textbf{occurs-check} &
P\land X[s] \doteq n^f\cdot S\{X[t]\}
&\mapsto& \mathbb{F} \\
\spacer\textbf{pruning-fail} &
P\land X[s] \doteq M
&\mapsto& \mathbb{F} \\
&\multicolumn{3}{p{\width}}{
if $n\notin s$ and $n\in_{\textsf{rig}} M$
} \\
\spacer\textbf{pruning} &
P\land X[s] \doteq M
&\mapsto& [Y\leftarrow Z[w]](P\land X[s]\doteq M) \\
&\multicolumn{3}{p{\width}}{
if $n\notin s$, $n$ occurs flexibly in $M$ in the $i$th argument of
the logic variable $Y$, $w=\textsf{weaken}(\Gamma_Y;i)$,
and $Z$ is a fresh logic variable with
$A_Z=A_Y$ and $\Gamma_Z=\Gamma_Y\div i$
} \\
\spacer\textbf{ctx-pruning} &
P
&\mapsto& [X\leftarrow Y[w]]P \\
&\multicolumn{3}{p{\width}}{
if $\Gamma_X=\cdot,A_p^{l_p},\ldots,A_1^{l_1}$ with
$l_n\in\{\fUA,\fUL\}$,
$w=\textsf{weaken}(\Gamma_X;n)$, and $Y$ is a fresh logic
variable with $A_Y=A_X$ and
$\Gamma_Y=\Gamma_X\div n$
} \\
\spacer\textbf{instantiation} &
P\land X[s] \doteq M
&\mapsto& [X\leftarrow M[s^{-1}]]P \\
&\multicolumn{3}{p{\width}}{
if $X$ does not occur in $M$,
$\Gamma_X$ contains no used affine assumptions,
and $n\in M$ implies $n\in s$
} \\
\spacer\textbf{intersection-eq} &
P\land X[s] \doteq X[s]
&\mapsto& P \\
\spacer\textbf{intersection-fail} &
P\land X[s] \doteq X[t]
&\mapsto& \mathbb{F} \\
&\multicolumn{3}{p{\width}}{
if $s\neq t$ and $s\cap t$ does not exist
} \\
\spacer\textbf{intersection} &
P\land X[s] \doteq X[t]
&\mapsto& [X\leftarrow Y[s\cap t]]P \\
&\multicolumn{3}{p{\width}}{
if $s\neq t$, $s\cap t$ exists, and $Y$ is a fresh logic variable with
$A_Y=A_X$ and $\Gamma_Y$ equal to the domain of the weakening
substitution $s\cap t$
} \\
\end{array}
}
\]
\caption{Pattern unification rules\label{fig:pat-unif}}
\end{figure} 

\medskip \noindent\textbf{Decomposition.}
Consider a unification equation $\Gamma\vdash M_1\doteq M_2 :A$ and
assume that $A$ is not a base type.

If $A=B\multimap C$ then we
must have $M_1=\widehat{\lambda}M'_1$ and $M_2=\widehat{\lambda}M'_2$.
In this case $M_1$ is equal to $M_2$ under some $\theta$
if and only if $M'_1$ is equal to $M'_2$ under $\theta$ and we therefore
apply \textbf{dec-lam-l}.  The other non-base type cases for $A$ are
similar and give rise to \textbf{dec-lam-a}, \textbf{dec-lam-i}, and
\textbf{dec-pair}.

If $A$ is a base type then $M_1=H_1\cdot S_1$ and $M_2=H_2\cdot S_2$
where $H_1$ and $H_2$ are either variables or logic variables.  The case
of logic variables is handled below.  We therefore have
$n^f\cdot S_1\doteq m^{f'}\cdot S_2$.  If $n\neq m$ then no $\theta$
can make the two equal and we can therefore apply
\textbf{dec-atomic-neq}.  If $n=m$ then the
spines must unify and we apply \textbf{dec-atomic-eq} where
$P\land S_1\doteq S_2$ is defined as:
\[
\begin{array}{llll}
P\land ()\doteq () &= P                                                                     & P\land (\textsf{fst}; S_1)\doteq (\textsf{fst}; S_2) &= P\land S_1\doteq S_2 \\
P\land (M_1; S_1)\doteq (M_2; S_2) &= P\land M_1\doteq M_2\land S_1\doteq S_2       \qquad  & P\land (\textsf{snd}; S_1)\doteq (\textsf{snd}; S_2) &= P\land S_1\doteq S_2 \\
P\land (M_1\affext S_1)\doteq (M_2\affext S_2) &= P\land M_1\doteq M_2\land S_1\doteq S_2   & P\land (\textsf{fst}; S_1)\doteq (\textsf{snd}; S_2) &= \mathbb{F} \\
P\land (M_1\lhat{;} S_1)\doteq (M_2\lhat{;} S_2) &= P\land M_1\doteq M_2\land S_1\doteq S_2 & P\land (\textsf{snd}; S_1)\doteq (\textsf{fst}; S_2) &= \mathbb{F} \\
\end{array}
\]
No other cases can occur because $n=m$ trivially imply that
they have the same type.

\medskip \noindent\textbf{Lowering.}
When a logic variable occurs in a unification problem in the form
$X[s]\cdot S$ with a non-empty spine, we know that $A_X$ cannot be a
base type.  And since canonical forms of non-base type have unique head
constructors, we can safely instantiate $X$ to that particular
constructor.  This is accomplished by the rules \textbf{lower-*}.
Therefore we can assume that all logic variables are of base type.

\medskip \noindent\textbf{Occurs check.}
Consider a unification equation of the form $X[s]\doteq M$.  If $X$
also occurs in the right-hand side then either $M=n^f\cdot S\{X[t]\}$ or
$M=X[t]$.  The latter case is handled below in \textbf{Intersection}.
In the former case we have the equation $X[s]\doteq n^f\cdot S\{X[t]\}$.
Since a pattern substitution $t$ applied to any term can never alter the
shape of the term but only rename variables this equation has no
solutions, and we can apply \textbf{occurs-check}.

\medskip \noindent\textbf{Pruning.}
When we have $X[s]\doteq M$ then Theorem~\ref{thm:occurrence} tells us
that under some $\theta$ solving the equation,
variables that do not occur in $s$ cannot occur in $M$.  Assume
that $n\notin s$ and $n\in M$.  If $n\in_{\textsf{rig}} M$ then no
instantiation of logic variables can get rid of the occurrence and
we apply \textbf{pruning-fail}.  If on the other hand
$n\in_{\textsf{flex}} M$ then the occurrence is in the $i$th argument
of some logic variable $Y$.  This means, however, that no instantiation
of $Y$ in a solution can contain~$i$.
By Lemma~\ref{lem:linoccur} we know that
$n$ cannot refer to a linear assumption in the context in which $X[s]$
and $M$ are typed and therefore the $i$th assumption in $\Gamma_Y$
cannot be linear.\footnote{Notice that this argument relies on the fact
that $Y$ is under a pattern substitution and thus has no linear-changing
variables.}
Let $w$ be the weakening substitution $\textsf{weaken}(\Gamma_Y;i)$
where \textsf{weaken} is defined as:
\[
\begin{array}{llll}
\textsf{weaken}(\Gamma,A^l;1) &=& \uparrow & \textrm{if }l\neq\fL \\
\textsf{weaken}(\Gamma,A^\fI;i+1) &=&
1^{\fI\fI}.\textsf{weaken}(\Gamma;i)\;\circ \uparrow \\
\textsf{weaken}(\Gamma,A^l;i+1) &=&
1^{\fA\fA}.\textsf{weaken}(\Gamma;i)\;\circ \uparrow &
\textrm{if }l\in\{\fA,\fUA\} \\
\textsf{weaken}(\Gamma,A^l;i+1) &=&
1^{\fL\fL}.\textsf{weaken}(\Gamma;i)\;\circ \uparrow &
\textrm{if }l\in\{\fL,\fUL\} \\
\end{array}
\]
Define $\Gamma\div i$ to be the context $\Gamma$ with the $i$th assumption removed.
We see that $\Gamma\vdash\textsf{weaken}(\Gamma;i):\Gamma\div i$.
Furthermore, this is a strong typing.
Since the $i$th assumption in $\Gamma_Y$ is not linear then
$w=\textsf{weaken}(\Gamma_Y;i)$ does indeed exist.
Theorem~\ref{thm:occurrence} tells us that $Y$ has to be
instantiated to something on the form $M'[w]$ and we can therefore apply
\textbf{pruning}.

\medskip \noindent\textbf{Context pruning.}
If a logic variable $X$ is declared in context
$\Gamma_X=\cdot,A_p^{l_p},\ldots,A_1^{l_1}$ with $l_n\in\{\fUA,\fUL\}$,
we know that $n$ cannot occur in a well-typed instantiation of $X$.
Therefore, by Theorem~\ref{thm:occurrence}, $X$ has to be
instantiated to something on the form $M[\textsf{weaken}(\Gamma_X;n)]$
and we can therefore apply \textbf{ctx-pruning}.

Note that pruning the context of $X$ in this way in the case of
$X[s]\doteq M$ may allow further pruning in $M$.  Additionally, repeated
applications of this step will ensure that no used affine assumptions
occur in the context of logic variables.  Therefore all
typings of the associated substitutions are strong.

\medskip \noindent\textbf{Instantiation.}
Consider the unification equation $X[s]\doteq M$ where all used affine assumptions have been
pruned from $\Gamma_X$ and the typing of $s$ therefore is strong.
If all $n\in M$ also occur in $s$ then
Theorem~\ref{thm:occurrence} tells us that $M$ is equal to
$M'[s]$ for some $M'$.  By Theorem~\ref{thm:inverse} we know that
$M'$ is equal to $M[s^{-1}]$ and we can therefore
instantiate $X$ by the rule \textbf{instantiation} provided that $X$
does not occur in $M$.

\medskip \noindent\textbf{Intersection.}
The final case is when we have $X[s]\doteq X[t]$.  If $s=t$ then the
equation will be trivially satisfied no matter what term $X$ might be
instantiated to, so we can simply remove the equation by the rule
\textbf{intersection-eq}.  

Consider an instantiation of $X$ to some $M$.  If for all
$n\in M$ we have $n[s]=n[t]$ then the equation is
clearly satisfied.  If on the other hand there is some $n\in M$ such
that $n[s]\neq n[t]$ then the two sides of the equation
will not be equal.  Therefore any variable $n$ for which
$n[s]\neq n[t]$ cannot occur in an instantiation of $X$.
If such an $n$ is linear then Lemma~\ref{lem:linoccur} tells us that
$n$ has to occur in all instantiations and we can conclude that there is
no solution and apply \textbf{intersection-fail}.
Otherwise, any instantiation of $X$ has to be on the form $M'[s\cap t]$ for some
$M'$ where $s\cap t$ is defined as the following weakening substitution:
\[
\begin{array}{llll}
M^{f}.s\cap M^{f}.t &=&
1^{ff}.(s\cap t)\;\circ \uparrow \\
n^{ff}.s\cap m^{ff}.t &=&
(s\cap t)\;\circ \uparrow & \textrm{if }n\neq m\textrm{ and }f\in\{\fI,\fA\} \\
\uparrow^n\cap \uparrow^n &=& \textsf{id} \\
\end{array}
\]
Note that $s\cap t$ exists exactly when $n[s]=n[t]$
for all linear $n$.  The domain of
$s\cap t$ is seen to be $\Gamma_X$ with those assumptions removed for which
$n[s]\neq n[t]$.
This step is summarized by the rule \textbf{intersection}.

\subsection{Correctness}
Correctness of the unification algorithm has three parts: preservation,
progress, and termination.
\begin{thm}\label{thm:unif-correct}
The unification algorithm solves all pattern unification problems
correctly.
\begin{enumerate}
\item
If $P\mapsto P'$ then the set of solutions to $P$ is equal to the set of
solutions to $P'$.
\item
If $P$ has unsolved equations (i.e.\ $P$ is not equal to $\mathbb{F}$ or
$\mathbb{T}$) then there exists a $P'$ such that $P\mapsto P'$.
\item
The unification algorithm terminates.
\end{enumerate}
\end{thm}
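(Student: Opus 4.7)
The proof splits into three independent clauses, which I would attack by different styles of case analysis. For preservation (1) I would go rule by rule; for progress (2) I would case on the shape of an arbitrary unsolved equation; for termination (3) I would exhibit a lexicographic measure on problems that strictly decreases under every rule.

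For preservation, the decomposition rules \textbf{dec-lam-*}, \textbf{dec-pair}, and \textbf{dec-atomic-*} are immediate since canonical forms of each non-base type have a unique top-level constructor, so two terms agree iff their matching subterms do. The \textbf{lower-*} rules use the same uniqueness at the type of a logic variable. \textbf{occurs-check} is sound because a pattern substitution cannot shrink a term, so $X[s]$ cannot equal a term properly containing $X[t]$. The three pruning rules (\textbf{pruning-fail}, \textbf{pruning}, \textbf{ctx-pruning}) all invoke Theorem~\ref{thm:occurrence}(1): a variable outside the substitution cannot appear in any valid instantiation of the left-hand side, so either the right side rigidly contains it (no solution) or the offending logic variable must be weakened away. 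For \textbf{instantiation} the side condition that $\Gamma_X$ contains no used affine assumptions (ensured by prior exhaustive \textbf{ctx-pruning}) makes the typing of $s$ strong, so Theorem~\ref{thm:occurrence}(2) yields some $M'$ with $M = M'[s]$, and Theorem~\ref{thm:inverse} identifies $M'$ uniquely as $M[s^{-1}]$. The intersection rules are analogous, with \textbf{intersection-fail} invoking Lemma~\ref{lem:linoccur}: a linear index must occur in every instantiation, so disagreement on it precludes any solution.

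For progress, pick any unsolved equation $\Gamma \vdash M_1 \doteq M_2 : A$ in $P$. If $A$ is not a base type, a \textbf{dec-lam-*} or \textbf{dec-pair} rule applies since both sides are canonical. Otherwise $M_i = H_i \cdot S_i$; if both heads are concrete variables one of \textbf{dec-atomic-eq}/\textbf{dec-atomic-neq} fires, and otherwise some head is a logic variable $X[s]$. A non-empty spine forces $A_X$ to be non-base and a \textbf{lower-*} rule fires, reducing to $M_1 = X[s]$ of base type. From there the remaining rules partition the possibilities: \textbf{ctx-pruning} handles any residual used assumption in $\Gamma_X$; \textbf{occurs-check} or the \textbf{intersection-*} rules handle $X$ appearing in $M$; and \textbf{pruning-fail}, \textbf{pruning}, or \textbf{instantiation} handle the indices of $M$ relative to $s$. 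These subcases are exhaustive.

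For termination, I would exhibit the lexicographic measure
\[
\mu(P) \;=\; \bigl(\textstyle\sum_X |A_X|,\; \sum_X |\Gamma_X|,\; \text{total syntactic size of equations in } P\bigr),
\]
ranging over the logic variables still referenced by $P$, with $|\cdot|$ any size function assigning compound types and contexts strictly larger size than their parts. Lowering strictly drops the first component, since the fresh variables have proper subformulas of $A_X$ as their types; \textbf{pruning}, \textbf{ctx-pruning}, and \textbf{intersection} preserve the first component and strictly shrink the second, as the fresh variable inherits the same type but lives in a strictly smaller context; \textbf{instantiation} strictly drops the first component by removing $X$ entirely from the sum (the occurs-check precondition guarantees no copy of $X$ survives); and the decomposition rules strictly shrink the third while preserving the first two. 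Reduction to $\mathbb{T}$ or $\mathbb{F}$ is terminal. The main obstacle is verifying that the global logic-variable substitutions performed by pruning, ctx-pruning, intersection, and instantiation never increase the earlier components of $\mu$: applying $[Y \leftarrow Z[w]]$ to an occurrence $Y[t]$ normalizes to $Z[w \circ t]$, whose size can grow, but since this touches neither the types nor the contexts of any other logic variable the lexicographic comparison still goes through. Nailing down this bookkeeping, together with the strong-typing invariant on which \textbf{instantiation} rests, is where the real technical care sits.
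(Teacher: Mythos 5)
Your proposal is correct and follows essentially the same route as the paper: preservation and progress are argued by the same rule-by-rule and shape-of-equation case analyses resting on Theorem~\ref{thm:occurrence}, Theorem~\ref{thm:inverse}, and Lemma~\ref{lem:linoccur}, and your termination measure is exactly the paper's lexicographic triple of total logic-variable type size, total context size, and total term size, with the same accounting of which rules decrease which component.
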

\begin{proof}
The discussion above in section~\ref{sec:unification}
proves preservation of solutions~(1) and progress~(2).
For termination~(3) we will consider the lexicographic ordering of
\begin{enumerate}
\item
The total size of all types of all logic variables occurring in the
unification problem.
\item
The total size of all contexts of the logic variables occurring in the
unification problem.
\item
The total size of all terms in the unification problem.
\end{enumerate}
We see that the decomposition rules \textbf{dec-*} decrease (3) while
keeping (1) and (2) constant.  The lowering rules \textbf{lower-*} and
\textbf{instantiation} decrease (1).  The \textbf{intersection-eq} rule decreases
(3) while keeping (1) and (2) constant.  The \textbf{pruning},
\textbf{ctx-pruning}, and
\textbf{intersection} rules decrease (2) while keeping (1) constant.
\end{proof}

\section{Linearity  pruning}
Within the pattern fragment we know that most general unifiers exist
and we have a decidable algorithm for finding them.  For practical
applications, however, it is often necessary to relax the pattern
restriction and accept that the algorithm sometimes returns left-over
unification problems.  Reed~\cite{Reed09lfmtp}, for example, describes
the dynamic intuitionistic pattern fragment that postpones any
unification equation as constraints that cannot be solved immediately.

In this section we will relax the restriction of pattern substitutions
from Definition~\ref{def:patsub} to \emph{linear-changing pattern substitutions}
permitting linear-changing extensions, greatly
expanding the applicability of our unification algorithm. If a
unification equation involving linear-changing pattern
substitutions cannot be resolved, it is simply postponed as a
constraint.  Instead of just returning $\mathbb{T}$ or $\mathbb{F}$,
the unification algorithm using linearity pruning may fail with 
leftover constraints.


In order to handle linear-changing extensions in substitutions we
first need to revisit the notion of variable occurrence that was defined in
section~\ref{sec:inversion}.  So far, occurrences have been
divided into two categories; rigid and flexible.  We will need to make
further distinctions into a total of 12 categories.

We say that an occurrence is in an \emph{intuitionistic position}
in a term if the term can be written as $M\{n^f\cdot S\cdot (N;S')\}$ such
that the occurrence is within $N$.  If an occurrence is not in an
intuitionistic position and the term can be written as
$M\{n^f\cdot S\cdot (N\affext S')\}$ such that the occurrence is within
$N$ we say that it is in an \emph{affine position}.
If an occurrence is neither in an intuitionistic position nor in an
affine position we say that it is in a \emph{linear position}.
This means that intuitionistic positions are precisely those
in which top-level affine and linear assumptions are
not available. Similarly,  affine positions are those in which top-level affine
assumptions are available but the linear are not.  Finally,  linear
positions are those where all top-level assumptions are available.

If  $n$ occurs flexibly in a term $M$, i.e.\ it occurs in the $i$th
argument of some logic variable $X$, there are five possibilities for
the $i$th assumption in $\Gamma_X$; it can be intuitionistic, affine,
used affine, linear, or used linear.  We say that $n$ occurs in an
\emph{intuitionistic argument} if the $i$th assumption in $\Gamma_X$ is
intuitionistic, we say that it occurs in an \emph{affine argument} if
the $i$th assumption in $\Gamma_X$ is affine, and we say that it occurs
in a \emph{linear argument} if the $i$th assumption in $\Gamma_X$ is
linear.  We will write this as $n\in_{\textsf{flex},\fI}M$,
$n\in_{\textsf{flex},\fA}M$, and $n\in_{\textsf{flex},\fL}M$, respectively.
Occurrences where the $i$th assumption in $\Gamma_X$ is either
used affine or used linear are not relevant, since context pruning
will have removed them
(see rule \textbf{ctx-pruning} in Figure~\ref{fig:pat-unif}).

This gives a total of 12 categories of occurrence, since any occurrence
is either in an intuitionistic, affine, or linear position and it is
either a rigid occurrence or a flexible occurrence in an intuitionistic,
affine, or linear argument.

\begin{figure}
\[
{\newcommand{\spacer}{\rule{0pt}{3.1ex}}
\newcommand{\width}{.8\textwidth}
\begin{array}{llll}
\textbf{pruning-fail} &
P\land X[s] \doteq M
&\mapsto& \mathbb{F} \\
&\multicolumn{3}{p{\width}}{
if $n\notin s$ and either $n\in_{\textsf{rig}} M$ or $n\in_{\textsf{flex},\fL} M$
} \\
\spacer\textbf{pruning} &
P\land X[s] \doteq M
&\mapsto& [Y\leftarrow Z[w]](P\land X[s]\doteq M) \\
&\multicolumn{3}{p{\width}}{
if $n\notin s$, $n$ occurs flexibly in $M$ in the $i$th argument of
the logic variable $Y$, $w=\textsf{weaken}(\Gamma_Y;i)$ exists,
and $Z$ is a fresh logic variable with
$A_Z=A_Y$ and $\Gamma_Z=\Gamma_Y\div i$
} \\
\end{array}
}
\]
\caption{Modified pruning rules\label{fig:pat-unif-pruning}}
\end{figure} 

If we are at any time forced to prune a variable occurring in a
linear argument we can simply fail, since the reason for pruning implies
that the variable cannot occur in the given place but
the linear typing tells us that it will.
Consider the case $X[s]\doteq M$ with $n\notin s$ and $n\in M$.  Since
we have widened the fragment we are considering to include
linear-changing pattern substitutions it is now possible that
$n\in_{\textsf{flex},\fL}M$.  This was previously impossible since if
every substitution is a pattern then $n\in_{\textsf{flex},\fL}M$ implies
that $n$ is linear which in turn implies $n\in s$.
The \textbf{pruning} and \textbf{pruning-fail}
rules therefore has to be modified slightly in this case as shown in
Figure~\ref{fig:pat-unif-pruning}.

\subsection{Linear-changing pattern substitutions}
\begin{defin}
A linear-changing pattern substitution $s$ is called a
\emph{linear-changing identity} substitution if it is on the form:
\[
1^{f_1f'_1}.2^{f_2f'_2}\ldots n^{f_nf'_n}.\uparrow^n
\]
or equivalently that it is $\eta$-equivalent to \textsf{id} except for
some number of linear-changing extensions.
\end{defin}

\begin{thm}\label{thm:injective}
Linear-changing identity substitutions are injective.
Given $M$, $M'$, and a linear-changing identity substitution $s$,
then $M[s]=M'[s]$ implies $M=M'$.
\end{thm}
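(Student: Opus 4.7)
My plan is to prove the theorem by structural induction on $M$, carried out simultaneously with a companion claim for substitutions: if $t, t'$ share a codomain and $t \circ s = t' \circ s$, then $t = t'$. The guiding intuition is that a linear-changing identity substitution $s = 1^{f_1 f'_1} \ldots n^{f_n f'_n}.\uparrow^n$ acts on every de Bruijn index as the identity, merely adjusting its flag, and so cannot collapse distinct canonical forms. As two preparatory observations I would verify, using the equalities of Figure~\ref{fig:equalities}, (i) that $i^{f'_i}[s] = i^{f_i}$ for each $i \le n$, and (ii) that after passing under a binder the substitution $1^{ff}.(s \circ \uparrow)$ unfolds to $1^{ff}.2^{f_1 f'_1} \ldots (n+1)^{f_n f'_n}.\uparrow^{n+1}$, which is again a linear-changing identity substitution. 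Observation (ii) guarantees that the induction hypothesis carries through binders.

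Armed with these observations, I would do a case analysis on the head of $M$. For the structural constructors (pairs, projections, the three binders, and applications), the defining equalities distribute $s$ through the head, so $M[s] = M'[s]$ forces $M'$ to share the same outermost constructor, and the induction hypothesis takes care of the subterms. For $M = n^f$, observation (i) pins $M[s]$ down to the specific variable $n^{f_n}$, which can only match $M' = n^{f'}$; since $M$ and $M'$ both live in the common codomain context of $s$, the flag is determined by that context and so $f = f'$. For $M = X[t]$ we obtain $M[s] = X[t \circ s]$, which forces $M' = X[t']$ with $t \circ s = t' \circ s$, and the companion substitution claim yields $t = t'$.

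For the substitution companion, I would exploit the fact that any canonical substitution with codomain $\Gamma_X$ has exactly $|\Gamma_X|$ extensions followed by a tail shift whose exponent is fixed by the domain, so the tails of $t$ and $t'$ already coincide. Using the equality $(N^g.t_0) \circ s = N[s]^g.(t_0 \circ s)$, the equation $t \circ s = t' \circ s$ reduces to entrywise comparisons $N_i[s] = N'_i[s]$, each handled by the term-level induction hypothesis. The main obstacle I expect is the careful bookkeeping around de Bruijn indices $n^f$ with $n > 1$: these unfold via $n^f = 1^f[\uparrow^{n-1}]$, and I need to verify that $\uparrow^{n-1} \circ s$ simply pops off the first $n-1$ extensions of $s$, leaving a suffix that is itself a tail of a linear-changing identity substitution, so that the base-case argument for $1^f$ lifts uniformly to all indices.
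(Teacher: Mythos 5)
Your proposal is correct and follows essentially the same route as the paper: the paper's proof is the one-line observation that $s$ merely relabels linearity flags (from $\fL$ to $\fA$ or $\fI$, or from $\fA$ to $\fI$) on the affected variables while leaving the term structure and indices untouched, and is ``therefore trivially injective.'' Your structural induction, together with the companion claim for substitutions and the remark that well-typedness in the common codomain context pins down the variable flags, is just the fully spelled-out version of that same argument.
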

\begin{proof}
The substitution $s$ simply changes the linearity flags in $M$ and $M'$ from
\fL{} to \fA{} or \fI{} or from \fA{} to \fI{} on
those variables that are linear-changing in
$s$ and it is therefore trivially injective.
\end{proof}

\begin{thm}\label{thm:decompose}
A linear-changing pattern substitution can be decomposed into a pattern
substitution and a linear-changing identity substitution.
If $s$ is a linear-changing pattern substitution then there exists a
pattern substitution $s'$ and a linear-changing identity substitution
$t$ such that $s=s'\circ t$.
\end{thm}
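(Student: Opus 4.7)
The plan is to construct the decomposition explicitly and then verify it. Given $s = a_1^{f_1}\ldots a_p^{f_p}.\uparrow^n$ with each $a_j = n_j^{f'_j}$, I define an intermediate context $\Gamma''$ that agrees with $\Gamma$ at every position except those indices $n_j$ at which $s$ has a linear-changing extension; at each such position the intrinsic flag is changed from $f'_j$ (the flag in $\Gamma$) to $f_j$ (the extension flag used by $s$). I then set
\[
t = 1^{g_1 g'_1}.\ldots n^{g_n g'_n}.\uparrow^n,
\]
where $g_i$ is the intrinsic flag of index $i$ in $\Gamma$ and $g'_i$ is that in $\Gamma''$. By construction each extension is either non-linear-changing (when $g_i = g'_i$) or exactly matches a linear-changing extension of $s$, so $t$ is a linear-changing identity substitution. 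I set $s' = n_1^{f_1}\ldots n_p^{f_p}.\uparrow^n$, using the same indices as $s$ but reading intrinsic flags off from $\Gamma''$; no extension of $s'$ is then linear-changing, and its indices remain distinct, so $s'$ is a pattern substitution.

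Verifying $s = s' \circ t$ reduces to an application of the identities of Figure~\ref{fig:equalities}. Each prefix contribution of the composition is $n_j^{f_j}[t]$; since $n_j^{f_j} = 1^{f_j}[\uparrow^{n_j-1}]$ and $\uparrow^{n_j-1}\circ t$ strips the first $n_j-1$ extensions of $t$, the look-up rule $1^f[M^f.s] = M$ gives $n_j^{f_j}[t] = n_j^{g_{n_j}} = n_j^{f'_j} = a_j$, with the required flag match supplied by $g'_{n_j} = f_j$. The tail $\uparrow^n\circ t$ collapses to $\uparrow^n$ because $t$ has exactly $n$ extensions. Thus $s'\circ t = a_1^{f_1}\ldots a_p^{f_p}.\uparrow^n = s$.

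The main obstacle is verifying the typings $\Gamma\vdash t:\Gamma''$ and $\Gamma''\vdash s':\Gamma'$. The construction is designed so that positions of $\Gamma$ that were intuitionistic or affine but are used more restrictively by $s$ become correspondingly more restrictive in $\Gamma''$, which is exactly how $s'$ now consumes them as an ordinary pattern substitution. The context splittings demanded by the rules of Figure~\ref{fig:typ-sub} can be read off from the original well-typed derivation of $\Gamma\vdash s:\Gamma'$: at each step of the derivation for $\Gamma''\vdash s':\Gamma'$, the split of a newly-linear or newly-affine position of $\Gamma''$ mirrors the way $s$ originally allocated that position, and positions of $\Gamma''$ inherited unchanged from $\Gamma$ split identically. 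For $\Gamma\vdash t:\Gamma''$, intuitionistic positions of $\Gamma$ split freely and therefore support both the $\textsf{nolin}$ side-conditions that arise for linear-changing $\fI\fL$ and $\fI\fA$ extensions and the affine side-conditions for $\fA\fL$ extensions, while positions unused by $s$ (necessarily non-linear, since $s$ is well-typed) are simply shifted away by the trailing $\uparrow^n$.
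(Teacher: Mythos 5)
Your construction is exactly the paper's: $s'$ is $s$ with each linear-changing extension replaced by the corresponding non-linear-changing one (\fI\fL{} and \fA\fL{} become \fL\fL{}, \fI\fA{} becomes \fA\fA{}), and $t$ is the linear-changing identity carrying those extensions; the paper's proof is just this one-sentence construction. Your additional verification of $s = s'\circ t$ via the equalities of Figure~\ref{fig:equalities} and your sketch of the typings $\Gamma\vdash t:\Gamma''$ and $\Gamma''\vdash s':\Gamma'$ are correct details that the paper leaves implicit.
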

\begin{proof}
Take $s'$ to be $s$ with all linear-changing extensions $\fA\fL$ and
$\fI\fL$ changed to
linear extensions and all linear-changing extensions $\fI\fA$ changed to
affine extensions and $t$ to be a linear-changing identity substitution
with the corresponding linear-changing extensions.
\end{proof}

\begin{thm}\label{thm:linprune}
Let $s$ be a linear-changing identity substitution with exactly one
linear-changing extension $n^{ff'}$ and $M$ be some term.
\begin{enumerate}
\item
If the linear-changing extension is $ff'=\fI\fL$
then there exists an $M'$ such that $M=M'[s]$ if
and only if the following five properties hold:
\begin{enumerate}
\item
$n$ occurs in $M$.
\item
There are no occurrences of $n$ in intuitionistic or affine positions in $M$.
\item
For all subterms $\langle M_1,M_2\rangle$ of $M$ under $k$ lambdas
$n+k$ occurs in $M_1$ if and only if it occurs in $M_2$.
\item
For all subterms $M_1\lhat{}M_2$ of $M$ under $k$ lambdas
$n+k$ occurs in at most one of $M_1$ and $M_2$.
\item
All flexible occurrences of $n$ in $M$ are in linear arguments.
\end{enumerate}
\item
If the linear-changing extension is $ff'=\fI\fA$
then there exists an $M'$ such that $M=M'[s]$ if
and only if the following three properties hold:
\begin{enumerate}
\item
There are no occurrences of $n$ in intuitionistic positions in $M$.
\item
For all subterms $M_1\lhat{}M_2$ and $M_1\affapp M_2$
of $M$ under $k$ lambdas
$n+k$ occurs in at most one of $M_1$ and $M_2$.
\item
All flexible occurrences of $n$ in $M$ are in linear or affine arguments.
\end{enumerate}
\item
If the linear-changing extension is $ff'=\fA\fL$
then there exists an $M'$ such that $M=M'[s]$ if
and only if the following four properties hold:
\begin{enumerate}
\item
$n$ occurs in $M$.
\item
There are no occurrences of $n$ in affine positions in $M$.
\item
For all subterms $\langle M_1,M_2\rangle$ of $M$ under $k$ lambdas
$n+k$ occurs in $M_1$ if and only if it occurs in $M_2$.
\item
All flexible occurrences of $n$ in $M$ are in linear arguments.
\end{enumerate}
\end{enumerate}
\end{thm}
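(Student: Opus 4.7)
The plan is to prove each of the three biconditionals by separating the forward and backward directions, and to handle the three cases in parallel because they differ only in which typing constraints need to be checked. The key observation is that a linear-changing identity substitution with a single linear-changing extension $n^{ff'}$ is the identity on variable positions and only changes the flag at index $n$: the source term $M'$ treats $n$ with the stricter flag (the one appearing in $\Gamma'$), and $M = M'[s]$ treats the same positions with the weaker flag (the one in $\Gamma$). The theorem therefore reduces to showing that $M$ has exactly the shape of a well-typed $M'$ over $\Gamma'$ after relabeling the $n$-occurrences, and the listed conditions are precisely the structural constraints that the typing rules of $\Gamma'$ impose when the $n$th assumption is stricter.

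For the forward direction I would induct on the derivation $\Gamma'\vdash M':A$, using that $M = M'[s]$ preserves the syntactic structure of $M'$. Each rule contributes exactly one of the listed properties. Lemma~\ref{lem:linoccur} forces $n$ to appear whenever its flag in $\Gamma'$ is $\fL$, giving property (a) in cases 1 and 3. The availability side conditions built into the rules for $\lambda M$, $M\;N$, $\afflam M$, and $M\affapp N$ exclude $n$ from intuitionistic positions when it is affine or linear, and from affine positions when it is linear, giving the corresponding property (b). The additive rule for $\langle M_1,M_2\rangle$ shares the context, forcing a linear $n$ to occur in both components or in neither. The multiplicative splittings in $M_1\lhat{}M_2$ and $M_1\affapp M_2$ put $n$ on exactly one side whenever it is linear or, in case 2, affine. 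Finally, for subterms $Y[t]$, an occurrence of $n$ in the $i$th extension of $t$ forces the $i$th assumption of $\Gamma_Y$ to carry a flag compatible with $n$'s flag in $\Gamma'$, which is exactly the flexible-occurrence condition.

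For the backward direction I would construct $M'$ by walking over $M$ and replacing every occurrence of $n$ (appropriately shifted under binders) with the same index carrying the stricter flag, and analogously relabeling the extensions that reference $n$ inside any logic-variable substitution. The equalities in Figure~\ref{fig:equalities} immediately give $M'[s] = M$, and by Theorem~\ref{thm:injective} this $M'$ is the unique such term. The real work is to show $\Gamma'\vdash M':A$, which proceeds by induction on $M$. Each listed condition discharges an obligation at the corresponding rule: the position-exclusion conditions let the typing rules with $\textsf{nolin}$ or $\overline{\Gamma}$ side conditions still apply after $n$'s flag is tightened; the pair condition supplies the additive context-sharing for $n$; the application condition fixes the multiplicative context split; and the flexible-occurrence condition ensures that the relabeled substitution at a logic variable still types against the unchanged $\Gamma_Y$, since that context already has the required flag at the relevant position.

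The main obstacle will be the logic variable case of the backward direction, where one must verify that changing a single flag inside a logic-variable substitution leaves the whole substitution a well-typed (linear-changing) pattern substitution over $\Gamma'$. The flexible-occurrence conditions are tailored exactly to this: the $i$th assumption in $\Gamma_Y$ already matches the tightened flag at the relocated variable. A secondary subtlety is the binder bookkeeping: going under $\widehat\lambda$, $\afflam$, or $\lambda$ shifts the target index from $n$ to $n+1$ and extends $s$ with a non-linear-changing extension on the left, so the induction must re-establish the listed conditions at $n+1$ before invoking the inductive hypothesis on the body. This is routine shifting but must be carried out for each of the three cases.
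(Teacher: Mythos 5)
Your proposal is correct and follows essentially the same route as the paper: the paper's (one-sentence) proof is an induction on $M$ observing that the three sets of properties are precisely the occurrence requirements imposed by the typing rules on linear variables, affine variables, and linear variables already known to satisfy the affine requirements, which is exactly the core of your argument. Your write-up merely makes explicit the two directions (reading the conditions off the typing derivation of $M'$, and conversely relabeling the flag on occurrences of $n$ and re-deriving well-typedness), together with the binder-shifting and logic-variable bookkeeping that the paper leaves implicit.
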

\begin{proof}
By induction on $M$ noting that each of the three sets of
properties are precisely the occurrence requirements for, respectively,
linear variables, affine variables, and linear variables known to adhere
to the affine occurrence requirements.
\end{proof}

Theorem~\ref{thm:linprune} tells us when there exists an $M'$ such that
$M=M'[s]$ for a linear-changing identity substitution $s$ with a
single linear-changing extension.  As a corollary we get the conditions
when $s$ is a general linear-changing identity substitution.  The
existence of $M'$ is equivalent to the
conjunction of the requirements for each linear-changing extension,
since we can decompose any linear-changing identity substitution $s$
with $k$ linear-changing extensions into
$s=s_1\circ s_2\circ\dots\circ s_k$ where each $s_i$ is a
linear-changing identity substitution with exactly one linear-changing
extension.

\subsection{Linearity pruning}
Consider the following unification equation where $s$ is a
linear-changing pattern substitution:
\[
\Gamma\vdash X[s]\doteq M : B
\]
We cannot invert $s$ directly but we can decompose it by
Theorem~\ref{thm:decompose} into a pattern
substitution $s'$ and a linear-changing identity substitution $t$
changing the problem to:
\[
\Gamma\vdash X[s'][t]\doteq M : B
\]
In this case we perform a number of pruning steps on the right-hand side
since in any solution the $M$ must adhere to the requirements in
Theorem~\ref{thm:linprune}.  We will consider each linear-changing
extension $n^{ff'}$ in $t$ individually.
The entire algorithm is given in Figure~\ref{fig:lin-prun1}
and each rule is explained below.

\begin{figure}
\[
{\newcommand{\hsp}{\hspace{0ex}}
\newcommand{\spacer}{\hsp\rule{0pt}{3.1ex}}
\newcommand{\width}{.84\textwidth}
\begin{array}{llll}
\hsp\textbf{int-pos} &
P\land X[s] \doteq M
&\mapsto& P\land X[s] \doteq M\land\textsf{prune}(n+k; N) \\
&\multicolumn{3}{p{\width}}{
if $n^{ff'}$ is a linear-changing extension in $s$ and $n$ occurs in an
intuitionistic position in $M$ in the subterm $N$ under $k$ lambdas
} \\
\spacer\textbf{aff-pos} &
P\land X[s] \doteq M
&\mapsto& P\land X[s] \doteq M\land\textsf{prune}(n+k; N) \\
&\multicolumn{3}{p{\width}}{
if $n^{f\fL}$ is a linear-changing extension in $s$ and $n$ occurs in an
affine position in $M$ in the subterm $N$ under $k$ lambdas
} \\
\spacer\textbf{prune-fail} &
P\land \textsf{prune}(n; M)
&\mapsto& \mathbb{F} \\
&\multicolumn{3}{p{\width}}{
if $n\in_{\textsf{rig}}M$ or $n\in_{\textsf{flex},\fL}M$
} \\
\spacer\textbf{prune} &
P\land \textsf{prune}(n; M)
&\mapsto& [Y\leftarrow Z[w]](P\land \textsf{prune}(n; M)) \\
&\multicolumn{3}{p{\width}}{
if $n$ occurs in the
$i$th argument of the logic variable $Y$ in $M$, the argument is either
intuitionistic or affine, $w=\textsf{weaken}(\Gamma_Y;i)$,
and $Z$ is a fresh logic variable with
$A_Z=A_Y$ and $\Gamma_Z=\Gamma_Y\div i$
} \\
\spacer\textbf{prune-finish} &
P\land \textsf{prune}(n; M)
&\mapsto& P \\
&\multicolumn{3}{p{\width}}{
if $n\notin M$
} \\
\spacer\textbf{multiplicative} &
P\land X[s] \doteq M
&\mapsto& P\land X[s] \doteq M\land\textsf{prune}(n+k; M_2) \\
&\multicolumn{3}{p{\width}}{
if $n^{\fI{}f'}$ is a linear-changing extension in $s$, $n+k$ occurs either
rigidly or flexibly in a linear argument in $M_1$, and $n+k$ occurs in $M_2$, where
either $M_1\lhat{}M_2$, $M_2\lhat{}M_1$, $M_1\affapp M_2$, or $M_2\affapp M_1$
is a subterm of $M$ beneath $k$ lambdas
} \\
\spacer\textbf{additive} &
P\land X[s] \doteq M
&\mapsto& P\land X[s] \doteq M\land\textsf{prune}(n+k; M_2) \\
&\multicolumn{3}{p{\width}}{
if $n^{f\fL}$ is a linear-changing extension in $s$,
$n+k\notin M_1$, and $n+k\in M_2$, where $\langle M_1,M_2\rangle$ or
$\langle M_1,M_2\rangle$ is a subterm of $M$ beneath $k$ lambdas
} \\
\spacer\textbf{int-strengthen} &
P\land X[s] \doteq M
&\mapsto& [Y\leftarrow Z[t]](P\land X[s]\doteq M) \\
&\multicolumn{3}{p{\width}}{
if $n^{\fI{}f'}$ is a linear-changing extension in $s$, $n$ occurs
flexibly in $M$ in the $i$th argument of the logic variable $Y$, the
argument is intuitionistic, $t=\textsf{linweaken}(i;\fI\fA)$, and $Z$ is
a fresh logic variable with $A_Z=A_Y$ and
$\Gamma_Z=\textsf{strengthen}(\Gamma_Y;i;\fI\fA)$
} \\
\spacer\textbf{aff-strengthen} &
P\land X[s] \doteq M
&\mapsto& [Y\leftarrow Z[t]](P\land X[s]\doteq M) \\
&\multicolumn{3}{p{\width}}{
if $n^{\fA\fL}$ is a linear-changing extension in $s$, $n$ occurs
flexibly in $M$ in the $i$th argument of the logic variable $Y$, the
argument is affine, $t=\textsf{linweaken}(i;\fA\fL)$, and $Z$ is
a fresh logic variable with $A_Z=A_Y$ and
$\Gamma_Z=\textsf{strengthen}(\Gamma_Y;i;\fA\fL)$
} \\
\spacer\textbf{no-occur} &
P\land X[s] \doteq M
&\mapsto& \mathbb{F} \\
&\multicolumn{3}{p{\width}}{
if $n^{f\fL}$ is a linear-changing extension in $s$ and $n\notin M$
} \\
\spacer\textbf{int-aff-invert} &
P\land X[s] \doteq M
&\mapsto& P\land X[s']\doteq M' \\
&\multicolumn{3}{p{\width}}{
if $n^{\fI{}f'}$ is a linear-changing extension in $s$,
there are no occurrences of $n$ in intuitionistic positions in $M$,
for all subterms $M_1\lhat{}M_2$ and $M_1\affapp M_2$
of $M$ under $k$ lambdas
$n+k$ occurs in at most one of $M_1$ and $M_2$, and
all flexible occurrences of $n$ in $M$ are in linear or affine arguments;
$s'$ and $M'$ are given by $s=s'\circ t$ and $M=M'[t]$
where $t=\textsf{linweaken}(n;\fI\fA)$
} \\
\spacer\textbf{aff-lin-invert} &
P\land X[s] \doteq M
&\mapsto& P\land X[s']\doteq M' \\
&\multicolumn{3}{p{\width}}{
if $n^{\fA\fL}$ is a linear-changing extension in $s$,
$n$ occurs in $M$,
there are no occurrences of $n$ in affine positions in $M$,
for all subterms $\langle M_1,M_2\rangle$ of $M$ under $k$ lambdas
$n+k$ occurs in $M_1$ if and only if it occurs in $M_2$, and
all flexible occurrences of $n$ in $M$ are in linear arguments;
$s'$ and $M'$ are given by $s=s'\circ t$ and $M=M'[t]$
where $t=\textsf{linweaken}(n;\fA\fL)$
} \\
\end{array}
}
\]
\caption{Linearity pruning\label{fig:lin-prun1}}
\end{figure} 


Since many of the rules rely on pruning, we extend our language of
unification problems with the constraint $\textsf{prune}(n;M)$ to
simplify the presentation.  This
constraint states that $n$ cannot occur in $M$ in a solution.  If this is
already the case then the rule \textbf{prune-finish} removes it.  If $n$ occurs either rigidly or
flexibly in a linear argument in $M$ then no instantiation of logic
variables can remove the occurrence, and therefore there are no
solutions.  The rule \textbf{prune-fail} covers this case.  If there are
flexible occurrences in either intuitionistic or affine arguments then
we can safely prune them away with the rule \textbf{prune}.

\medskip \noindent\textbf{Position-based pruning.}
The variable $n$ cannot occur in any intuitionistic position.  
Furthermore, if $f'=\fL$ then $n$ also cannot occur in affine positions.
These occurrences can therefore be pruned away with the rules
\textbf{int-pos} and \textbf{aff-pos}.

\medskip \noindent\textbf{Pruning at multiplicative context splits.}
We will now consider all linear applications $M_1\lhat{}M_2$ and all affine
applications $M_1\affapp M_2$
in the term $M$ and compare occurrences in $M_1$ and $M_2$,
as these positions are where the context is split
multiplicatively.

For any multiplicative context split the variable should only occur in one
of the branches by Theorem~\ref{thm:linprune}.  A multiplicative split
with rigid or linear argument
occurrences in one of the branches therefore allows us to prune any
occurrences in the other branch with the rule \textbf{multiplicative},
and if this is impossible due to rigid
or linear argument occurrences in both branches, we conclude that
there is no solution by following up with \textbf{prune-fail}.
We can restrict the \textbf{multiplicative} rule to the case where
$f=\fI$, since $ff'=\fA\fL$ implies that $n$ already occurs in at most
one of the branches at each multiplicative split.

\medskip \noindent\textbf{Pruning at additive context splits.}
Similarly, we consider all pairs $\langle M_1,M_2\rangle$ in the term
$M$, i.e.\ the places where the context is split additively.
If $f'=\fL$ then the variable $n$ must occur in either both
branches of the additive split or in none of them.  An additive split without occurrences in
one of the branches therefore allows us to prune any occurrences in the
other branch using the \textbf{additive} rule.

\medskip \noindent\textbf{Strengthening intuitionistic variables.}
Consider the case when $f=\fI$, i.e.\ $n$ is intuitionistic, and
consider some flexible occurrence of $n$ in an intuitionistic argument,
say the $i$th, of some logic variable $Y$ in $M$.  If $f'=\fL$ then we
do not necessarily know whether this particular occurrence should be
pruned away or strengthened to a linear occurrence, but in either case,
and also if $f'=\fA$, we can safely
strengthen the $i$th assumption of $Y$ from intuitionistic to affine.
Let $t=\textsf{linweaken}(\Gamma_Y;i;\fI\fA)$ and $Z$ be a fresh logic variable
with $A_Z=A_Y$ and $\Gamma_Z=\textsf{strengthen}(\Gamma_Y;i;\fI\fA)$,
where $\textsf{linweaken}$ and $\textsf{strengthen}$ are defined as
follows:
\[
\begin{array}{llll}
\textsf{linweaken}(\Gamma,A^f;1;ff') &=& 1^{ff'} \\
\textsf{linweaken}(\Gamma,A^\fI;i+1;ff') &=&
1^{\fI\fI}.\textsf{linweaken}(\Gamma;i;ff')\;\circ \uparrow \\
\textsf{linweaken}(\Gamma,A^l;i+1;ff') &=&
1^{\fA\fA}.\textsf{linweaken}(\Gamma;i;ff')\;\circ \uparrow &
\textrm{if }l\in\{\fA,\fUA\} \\
\textsf{linweaken}(\Gamma,A^l;i+1;ff') &=&
1^{\fL\fL}.\textsf{linweaken}(\Gamma;i;ff')\;\circ \uparrow &
\textrm{if }l\in\{\fL,\fUL\}
\\[1.5ex]
\textsf{strengthen}(\Gamma,A^f;1;ff') &=& \Gamma,A^{f'} \\
\textsf{strengthen}(\Gamma,A^l;i+1;ff') &=& \textsf{strengthen}(\Gamma;i;ff'),A^l \\
\end{array}
\]
Note that $\Gamma\vdash\textsf{linweaken}(\Gamma;i;ff') :
\textsf{strengthen}(\Gamma;i;ff')$ when the $i$th assumption in $\Gamma$
is $A^f$ and $ff'$ is either $\fI\fA$, $\fI\fL$, or $\fA\fL$.  When
referring to $\textsf{linweaken}$ we will sometimes leave out the context
and simply write $\textsf{linweaken}(i;ff')$ as $\Gamma$ can be
inferred from the codomain of the substitution.

We can now instantiate $Y$ to $Z[t]$ as shown in the
\textbf{int-strengthen} rule.
When we cannot apply this rule anymore, and we furthermore cannot apply
any of the pruning steps described above, then either $M$ satisfies the
three conditions of part~2 of Theorem~\ref{thm:linprune} or else there
is some subterm $M_1\lhat{}M_2$ or $M_1\affapp{}M_2$ with flexible
occurrences in both $M_1$ and $M_2$.  In the latter case there is really
nothing else to do.\footnote{If we instead of a most general unifier
  were looking for the set of most general unifiers then we could
  easily enumerate the different possible solutions by introducing a
  disjunction and then either prune the variable from $M_1$ or $M_2$.}
In the former case, we can write the equation $X[s]\doteq M$ as
$X[s'][t]\doteq M'[t]$ where
$t=\textsf{linweaken}(n;\fI\fA)$.  Since $t$ is injective this equation
simplifies to $X[s']\doteq M'$, which corresponds to changing
every occurrence of $n^\fI$ to $n^\fA$.  This is
summarized by the rule \textbf{int-aff-invert}.

\medskip \noindent\textbf{Strengthening affine variables.}
Consider now the case when $ff'=\fA\fL$, i.e.\ $n$ is affine.  Since we
know that $n$ occurs affinely but should occur linearly,
no more pruning will be necessary.  This means that any
flexible occurrence of $n$ in an affine argument,
say the $i$th, of some logic variable $Y$ in $M$ can be strengthened to
a linear occurrence.  Thus, as is summarized in the
\textbf{aff-strengthen} rule we instantiate $Y$ to $Z[t]$, where $Z$ is
a fresh logic variable with $A_Z=A_Y$,
$\Gamma_Z=\textsf{strengthen}(\Gamma_Y;i;\fA\fL)$, and
$t=\textsf{linweaken}(\Gamma_Y;i;\fA\fL)$.
Since we know that $n$ is supposed to be linear then it should also occur.
If it does not, we can fail with the rule \textbf{no-occur}.

If none of the rules \textbf{no-occur}, \textbf{aff-pos},
\textbf{additive}, or \textbf{aff-strengthen} apply then $n^{\fA\fL}$
satisfies the four properties of part~3 of Theorem~\ref{thm:linprune}
and can be strengthened from affine to linear using
\textbf{aff-lin-invert}.

\bigskip

As an example we sketch how the algorithm solves
equation~\eqref{eq:split1} supposing that it has already been lowered.
$F_1[1^{\fI\fL}.\uparrow]\doteq c\lhat{}H_1[1^{\fI\fI}.\uparrow]\mapsto
F_1[1^{\fI\fL}.\uparrow]\doteq c\lhat{}H_2[1^{\fI\fA}.\uparrow]\mapsto
F_1[1^{\fA\fL}.\uparrow]\doteq c\lhat{}H_2[1^{\fA\fA}.\uparrow]\mapsto
F_1[1^{\fA\fL}.\uparrow]\doteq c\lhat{}H_3[1^{\fA\fL}.\uparrow]\mapsto
F_1[1^{\fL\fL}.\uparrow]\doteq c\lhat{}H_3[1^{\fL\fL}.\uparrow]$.  The
last equation is a pattern, which can be solved directly.

\subsection{Correctness}
The discussion above relies heavily on
Theorem~\ref{thm:linprune} and proves that the algorithm preserves
solutions.   It is therefore easily possible to generalize part~1 of the Correctness
Theorem~\ref{thm:unif-correct} to the version of the unification
algorithm including linearity pruning.
Termination (part~3) also holds for the extended algorithm with a slight
elaboration of the termination ordering.  When calculating the size of
a term we will order the linearity flags $\fI>\fA>\fL$ because with
this ordering,
the strengthening rules \textbf{int-strengthen},
\textbf{aff-strengthen}, \textbf{int-aff-invert}, and
\textbf{aff-lin-invert} decrease unification problems in size.
Furthermore, we require
that every introduction of the $\textsf{prune}(\cdot;\cdot)$ constraint is followed by
a sequence of \textbf{prune} steps followed by a \textbf{prune-fail} or
\textbf{prune-finish} step.  When the introduction and elimination of
the $\textsf{prune}(\cdot;\cdot)$ constraint are seen together as one step then the
combined result  always reduces the termination measure.
However, since the extended algorithm can get
stuck on certain equations with a ``don't know'', we have to
accept that progress, as stated in part~2 of the theorem, no
longer holds.  In these cases we can simply report a set of leftover
constraints, each of which require strengthening of some intuitionistic
variable that occurs flexibly in multiple parts of the right-hand side.

\section{Conclusion}
We have defined the pattern fragment for higher-order unification problems
in linear and affine type theory.  We have proved that all higher-order
unification equations within this fragment have no solutions or a most
general unifier, and given an algorithm to construct it.  Furthermore,
we have extended the unification algorithm beyond the pattern fragment
to those non-pattern equations that arise due to the additional
constraints from the linear and affine type system.

\bibliographystyle{eptcs}
\bibliography{lin-uni}

\end{document}